\newtheorem{proof}{Proof}
\newtheorem{proposition}{Proposition}
\newtheorem{definition}{Definition}
\begin{document}
\title{Secure Routing with Power Optimization for Ad-hoc Networks}
\author{Hui-Ming Wang, \IEEEmembership{Senior Member, IEEE,} Yan Zhang,  Derrick Wing Kwan Ng, \IEEEmembership{Senior Member, IEEE,}~and~Moon~Ho~Lee,~\IEEEmembership{Life~Senior~Member,~IEEE}

\thanks{H.-M. Wang and Y. Zhang are with the School of Electronic and Information Engineering, Xi'an Jiaotong University, and also with the Ministry of Education Key Laborato  ry for Intelligent Networks and Network Security, Xi'an 710049, China (e-mail: xjbswhm@gmail.com; yzhangxjtu@163.com).}
\thanks{D. W. K. Ng is with the School of Electrical Engineering and Telecommunications, University of New South Wales, Sydney, NSW 2052, Australia (e-mail: w.k.ng@unsw.edu.au).}
	\thanks{M. H. Lee is with the Division of Electronics Engineering, Chonbuk National 	University, Jeonju 561-756, Korea. (e-mail: moonho@jbnu.ac.kr).}
}
\maketitle
\begin{abstract}
In this paper, we consider the problem of joint secure routing and transmit power optimization for a multi-hop ad-hoc network under the existence of randomly distributed eavesdroppers following a Poisson point process (PPP).
%Multiple legitimate relays adopting decode-and-forward (DF) protocol are deployed in the network to assist the communication between the source and destination.
Secrecy messages are delivered from a source to a destination through a multi-hop route connected by multiple legitimate relays in the network.
%For a given route, we derive the exact expressions of connection outage probability (COP) and secrecy outage probability (SOP), and obtain the  optimal transmission powers and minimum COP. The link weights are derived and the optimal route was obtained using Dijkstra's algorithm.
Our goal is to  minimize the end-to-end connection outage probability (COP) under the constraint of a secrecy outage probability (SOP) threshold, by optimizing the routing path and the transmit power of each hop jointly.
We show that the globally optimal solution could be obtained by a two-step procedure where the optimal transmit power has a closed-form and the optimal routing path can be found by Dijkstra's algorithm.
Then a friendly jammer with multiple antennas is applied to enhance the secrecy performance further, and the optimal transmit power of the jammer  and each hop of the selected route is investigated.
This problem can be %formulated as a monotonic optimization program and
solved optimally via an iterative outer polyblock approximation with one-dimension search algorithm.
Furthermore, suboptimal transmit powers can be derived using the successive convex approximation (SCA) method with a lower complexity.
%=====================================================
Simulation results
show the performance improvement of the proposed algorithms for both non-jamming and jamming scenarios, and also reveal  a non-trivial trade-off between the numbers of hops and the transmit power of each hop for
secure routing.

%The COP was shown to be lowered by the increase of the nodes' number and decrease of the eavesdroppers' density. The performance improvement of the proposed two algorithms was also confirmed.
\end{abstract}
\begin{IEEEkeywords}
Physical layer security, Poisson point process, secure routing, ad-hoc relay network, monotonic optimization.
\end{IEEEkeywords}

\section{Introduction}

Ad-hoc networks have gained extensive research and analysis recent years due to the characteristics of self-organization and flexible networking \cite{ad_hoc_survey}.
% and has received considerable attention.
%how to guarantee the security in ad-hoc networks has received considerable attention.
However, because of the absence of a centralized administration and limited system resources, guaranteeing communication security in ad-hoc networks is quite challenging.
Specifically, in a multi-hop environment, since the information needs to be transmitted and relayed multiple times,
the threat from information leakage becomes higher
and the secrecy guarantee is quite difficult.
% especially routing secrecy in multi-hop environment.\
%In a multi-hop ad-hoc network, the relaying of the message increases the risk of information leakage and causes the network easier to be attacked. Thus, routing security is of great importance.
The traditional methods of interception avoidance are based on encryption technologies, which may not be applicable to emerging ad-hoc networks.
For instance, the time-varying network topologies require complicated key management which is hard to accomplish in decentralized networks.
Besides, the computing and processing abilities of the nodes may be limited and cannot afford the sophisticated encryption calculation.
%since are decentralized and lack central administration.
%consumes great network resources.

On the other hand, physical layer security, an approach to achieve secrecy through the aspect of information theory by utilizing the characteristics of wireless channels,
has been widely studied for its advantages of low complexity and convenient distributed implementation \cite{Wyner-wiretap-channel, Survey-of-PLS}.
As a result,
various network models applying physical layer security have been investigated in the literature, such as interference channels \cite{if-channel1,if-channel2}, broadcast and multi-access channels \cite{broad-multi2,broad-multi3}, cooperative relay channels \cite{cooperate-relay2,cooperate-relay3,reviewer1_2},  and multi-antenna channels \cite{multi-antenna2,multi-antenna3,multi-antenna4,reviewer1_1}. The physical layer security approaches have also been introduced into the multihop ad-hoc or relaying networks. For instance,
authors in \cite{P1} %Relay selection schemes for dual hop networks under security constrains with multiple eavesdroppers.2013%
compared three commonly-used relay selection schemes for a dual-hop network
with the constraints of security under the existence of eavesdroppers.
The authors in \cite{P2} %Optimal power allocation for physical layer security in multihop DF relay networks 2016
proposed an optimal power allocation strategy for a predefined routing path to maximize the achievable secrecy rates under the constraint of maximum power budget. The authors in \cite{P3} %Security/QoS-Aware Route Selection in Multi-hop Wireless Ad Hoc Networks  Yang Xu, Jia Liu 2016
 studied the secrecy and connection outage performance
 %secrecy outage probability (SOP) and connection outage probability (COP) expressions
under amplify-and-forward (AF) and decode-and-forward (DF) protocols for an end-to-end route, and discussed the trade-off between security and  QoS performance.
The authors in \cite{P4} % SOQR: Secure Optimal Qos Routing in Wireless Ad hoc Networks Jia Liu, Yang Xu 2017
explored the method to guarantee the network security
via routing and power optimization for a network with the deployment of cooperative jamming. In particular, \cite{P4}  assumed over-optimistically that
each jammer was located near one malicious eavesdropper to interfere the wiretapped information, which hardly be true in practice.

%====================================================================================================================

However, in the aforementioned works, perfect channel state information (CSI) and the locations of eavesdroppers are assumed to be available at the legitimate users, which is often impractical since the eavesdroppers usually work
passively and remain silent to hide their existence.
As a result, a general framework based on stochastic geometry was proposed to model the uncertainty of eavesdroppers' location \cite{summary_ppp}.
%To obtain conclusions propagable for more scenarios,the uncertainty of eavesdroppers' locations is supposed and stochastic geometry is applied to model the randomness of eavesdroppers \cite{summary_ppp}.
 In fact, Poisson point process (PPP) is the most widely adopted
distribution, and have been adopted in various researches studying network security, e.g.
\cite{ppp4,txzheng_artificial_noise,hmwang_pls_heterogeneous}.
%In \cite{ppp4},
%the authors considered the trade-off between security requirements and throughput for a large-scale network with eavesdroppers distributed as PPP.
%The authors in \cite{txzheng_artificial_noise} optimized the secrecy performance under the aid of artificial noise
%for a network with  multi-antenna transmitters and randomly located eavesdroppers.
%The authors in \cite{hmwang_pls_heterogeneous} proposed a mobile association policy and studied the connection and secrecy probabilities for a k-tier heterogeneous cellular network with PPP distributed base stations, users, and eavesdroppers.
Under the framework of stochastic geometry, in \cite{P6}
the authors %[15]Secure Routing in Multihop Wireless Ad-Hoc Networks With Decode-and-Forward Relaying   Jianping Yao ,  Xiangyun Zhou   2016.2
investigated the secure routing problem.
The routing strategy aimed to achieve the highest secure connection probability under the DF relaying protocol.
The locations of eavesdroppers were assumed following the homogeneous PPP and both cases of colluding and non-colluding eavesdropping were considered.
%The allocation of transmission powers, however, was not mentioned.
Yet, optimal power allocation was not considered and it is unclear how the power allocation affects the system performance.

We have to point out here that for the routing security in a multi-hop network,
%the trade-off between the numbers of hops and transmit power is noteworthy.
the performance of secure communication is coupled with
%there is a significant trade-off between
the numbers of hops and the transmit powers of each hop.
With higher (lower) transmit power, each hop can support a larger (smaller) transmission distance so that
less (more) numbers of routing hops are required to successfully relay messages to the destination. However, higher (lower) transmit power increases (decreases)  the probability of information leakage while less (more) numbers of routing hops decrease (increase) it.
Hence {\it{a non-trivial trade-off naturally exists for the secrecy performance between the numbers of hops and the transmit powers of each hop}}. However, none of the above works has revealed such an interesting trade-off, which is the main focus  of this paper. In particular,
in this paper, we investigate the secure routing and transmit power optimization problem in DF relaying networks
accompanied with PPP distributed eavesdroppers. The routing secrecy is evaluated by the minimum connection outage probability (COP)
subject to the constraint of secrecy outage performance.
The optimal route achieved the lowest COP is selected from all possible routing paths and the corresponding transmit power of each hop is also optimized. Moreover, friendly jamming is applied to further improve the security performance.
Different from previous studies, we a) consider secure routing under randomly distributed eavesdroppers and optimize the transmit powers jointly, and b) solve the power optimization problem with jamming using both the optimal monotonic optimization and the successive convex approximation (SCA) methods.
The main contributions are summarized as follows:

%=======================================================================================
1) The secure routing design for a multi-hop network with the PPP distributed eavesdroppers is formulated as an optimization problem which minimizes the COP under a SOP constraint. The SOP and COP expressions for a given end-to-end path are derived in closed form. % to measure the performance of the selected route. % and facilitate our analysis.
The closed-form expression of the optimal transmit powers is obtained.
By analyzing the expression of the minimum achieved COP for a given route and defining the routing weights, the routing problem can be interpreted as finding the route with the lowest sum weights, which can be solved optimally by the Dijkstra's algorithm.

2) A friendly jammer with multiple antennas is introduced to enhance the outage performance.
 For any fixed jamming power, the transmit powers allocation for the legitimate nodes on the obtained route is formulated as a monotonic optimization problem. The outer polyblock approximation with a one-dimension search algorithm is proposed to achieve the globally optimal solution. Later, to strike a balance between system performance and computational complexity, the SCA method is used to solve the problem considering its non-convexity feature. Though the solution derived from the SCA method is not globally optimal, the numerical results show that it achieves a close-to-optimal performance when it has a proper initial point.
%the outcomes of the two methods are very close to each other. %to find the optimal solution.
 %the optimal solution can be found by  In order to find the optimal transmit powers, we declare that the feasible solution forms a normal set and raise the outer polyblock approximation algorithm.

3) The trade-off between the numbers of hops and the transmit powers of each hop is discussed for the routing security. The distribution of the numbers of the hops derived from simulation indicates that too many or few numbers of hops increase the leakage of information and rarely guarantee the security performance. This accentuates the importance of the joint consideration of transmit powers and secure routing.

%============================================================
The remainder of this paper is organized as follows. In Section II%===%,
, we present the system model of the multi-hop relaying network and formulate the secure routing as an optimization problem. In Section III, %----%
the routing and power optimization method
is provided.
 %we give the theoretical performance analysis and the solution algorithm to the problem.
The power optimization problem taking into account of friendly jamming is proposed in Section IV, then the outer polyblock approximation algorithm and the SCA algorithm are given in Section V.
Numerical results are presented in Section VI to illustrate the performance of the proposed algorithms.
The conclusions are summarized in Section VII.

The following notations are used in this paper. $(\cdot)^H$ and $|\cdot|$ represent Hermitian transpose and
absolute value, respectively. $\mathbb{P}(\cdot)$ denotes the probability and $E_A(\cdot)$ denotes the mathematical expectation with respect to A. $\mathcal{CN}(\mu,\sigma^2)$ represents the circularly symmetric complex Gaussian distribution with mean $\mu$ and variance $\sigma^2$.
The union and difference between two sets $\Omega_1$ and $\Omega_2$ are denoted by $\Omega_1\bigcup\Omega_2$ and $\Omega_1 \setminus\Omega_2$, respectively.

\section{System Model And Problem Description} \label{system_model}

We consider a multi-hop wireless ad-hoc network which consists of $M$ legitimate nodes \cite{P6}. The distribution of the eavesdroppers in the network follows the homogeneous PPP denoted as $\Phi$ with density $\lambda_e$. Each of the legitimate nodes and eavesdroppers is equipped with a single omnidirectional antenna.
%% and eavesdroppers. Each relay node and eavesdropper is equipped with a single omnidirectional antenna.
%The network chooses several relay nodes to form a routing path %which is denoted by $\Pi$,
%so that the source can send messages to the destination through these relays.
One legitimate node aims to send messages to another in the network.
%One pair of nodes in the network are communicating.
In order to transmit information from the source node to the destination node securely, a routing path needs to be found.
As we have mentioned before, there exists a non-trivial tradeoff between the numbers of hops and transmit powers.
Therefore the messages can be sent either directly to the destination with a high transmit power, or through multihop via several relays.
Assuming a routing path contains $N-1$ relay nodes,
each hop %of $\Pi$
can be denoted by $l_n,n=1,...,N$, with the transmitter and the receiver at the $n$-th hop denoted as $T_n$ and $R_n$, respectively. Then, the entire routing path can be denoted by $\Pi=\{l_1,l_2,...,l_{N}\}$.
An illustration of the system model is shown in Fig. \ref{model}.

%To avoid the message from being obtained by the eavesdroppers, one jammer with $M_T$ antennas is used. The jammer transmit artificial noise in every direction except where the legitimate nodes are to interfere the eavesdroppers without affecting the legitimate network.

%%An $N$+1-length sequence of relay nodes form an $N$-hop route in the network, which is denoted by $\Pi$. For the $n$th link $l_n$,the transmitter is denoted by $T_n$ and the receiver is denoted by $R_n$.

\begin{figure}[t]
\centering
\includegraphics[width=3.0in]{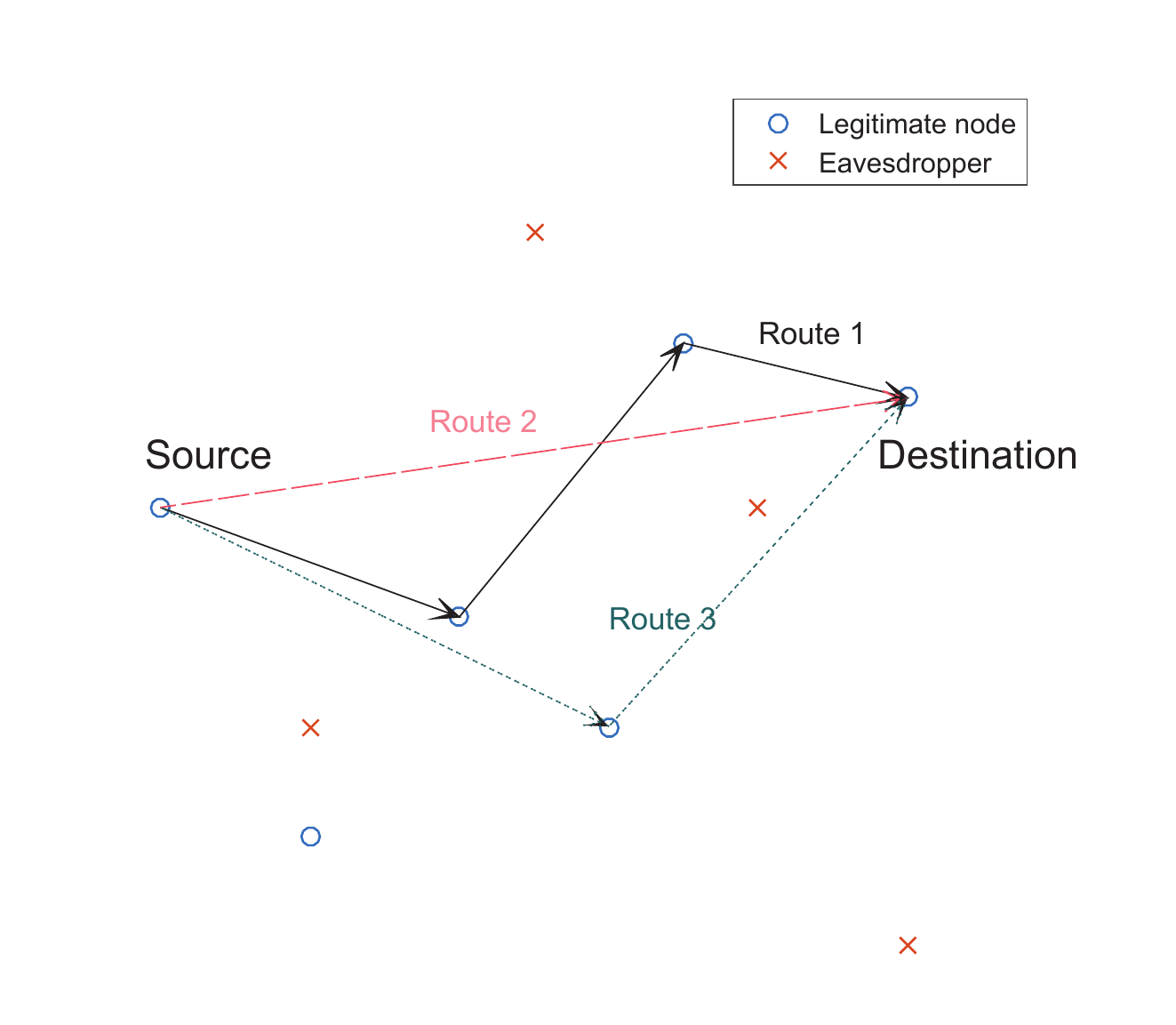}
\caption{The system model with multiple legitimate nodes and eavesdroppers.}
\label{model}
\end{figure}

The wireless channels are subjected to small-scale Rayleigh fading together with a large-scale path loss.
Each Rayleigh fading coefficient $h_{i,j}$ ($i$ and $j$ denote the transmitter and receiver of the path, respectively) is modeled as independent complex Gaussian with zero mean and unit variance, i.e., $h_{i,j}\sim\mathcal{CN}(0,1)$, and the path loss exponent is $\alpha$. We assume that the CSI and the locations of legitimate nodes are known while those of the eavesdroppers cannot be obtained because the eavesdroppers work passively.

Since each route $\Pi$ from the source to the destination is composed of several hops, under DF relaying scheme, a widely adopted protocol in literature \cite{P1,P2,P3,P4},
we first consider the transmission of hop $l_n$ from $T_n$ to $R_n$.
%Using beamforming vector \textbf{v} and the channel fading coefficient from the jammer to the legitimate receivers ${\mathbf{h}_{JR_{n}}}$, the jammer nullify its signal at the legitimate nodes according to the following relation
%\begin{equation} \label{nullify}
%\mathbf{h}^{H}_{JR_{n}}\mathbf{v}=0 ,n=1,...,N
%\end{equation}
Let $x_{T_n}$ %and $x_{J}$
denote the symbol transmitted by $T_n$%\and the jammer
, then the received signals at the legitimate receiver $R_{n}$ and the eavesdroppers are given by
%\begin{equation}\label{y_n}
%  y_{R_n} = \frac{\sqrt {P_{T_n}} h_{{T_n}{R_n}}}{d_{{T_n}{R_n}}^{\alpha /2}}{x_{T_n}} + n_{R_n}
%\end{equation}
%
%\begin{equation}\label{y_e}
%y_{{E_i}} = \frac{\sqrt {P_{T_n}} h_{{T_n}{E_i}}}{d_{{T_n}{E_i}}^{\alpha /2}}{x_{T_n}} + \frac{\sqrt {P_{J}} \mathbf{h}^{H}_{JE_{i}}\mathbf{v}}{d_{{J}{E_i}}^{\alpha /2}}{x_{J}} + n_{E_i}
%\end{equation}
\begin{align}
&y_{R_n} = \frac{\sqrt {P_{T_n}^{(\Pi)}} h_{{T_n}{R_n}}}{d_{{T_n}{R_n}}^{\alpha /2}}{x_{T_n}} + n_{R_n}\label{y_n} \  \  \text{and} \\
&y_{{E_i}} = \frac{\sqrt {P_{T_n}^{(\Pi)}} h_{{T_n}{E_i}}}{d_{{T_n}{E_i}}^{\alpha /2}}{x_{T_n}} + n_{E_i}, \label{y_e}
\end{align}
where $y_{R_n}$ and $y_{E_i}$ denote the received signals at receiver $R_n$ and eavesdropper $E_i\in\Phi$, respectively. $P_{T_n}^{(\Pi)}$ %and $P_J$
denotes the transmit power of node $T_n$ in route $\Pi$%and jammer respectively
. $d_{i,j}$ denotes the distance between $i$ and $j$. $n_{R_n}$ and $n_{E_i}$ are the noises at $R_n$ and $E_i$ following
$\mathcal{CN}(0,\sigma^2)$.

 Now we can derive the expressions of SNR at $R_n$ and $E_i$, which are given by
%\begin{equation}
%\mathrm{SINR}_{R_n}=\frac{P_{T_n}|h_{{T_n}{R_n}}|^2}{d_{{T_n}{R_n}}^\alpha\sigma_n^2}
%\end{equation}
%\begin{equation}
%\mathrm{SINR}_{E_i}=\frac{P_{T_n}|h_{{T_n}{E_i}}|^2/d^{\alpha}_{{T_n}{E_i}}}{\sigma^2+P_J|\mathbf{h}^{H}_{JE_{i}}\mathbf{v} |^2/d^{\alpha}_{JE_i}}
%\end{equation}
\begin{align}
&\mathrm{SNR}_{R_n}=\frac{P_{T_n}^{(\Pi)}|h_{{T_n}{R_n}}|^2}{d_{{T_n}{R_n}}^\alpha\sigma^2} \  \  \text{and}\\
&\mathrm{SNR}_{E_i}=\frac{P_{T_n}^{(\Pi)}|h_{{T_n}{E_i}}|^2}{d_{{T_n}{E_i}}^\alpha\sigma^2},
\end{align}
respectively.

%%sop cop
In this paper, we adopt
 \textit{connection outage probability} (COP) and \textit{secrecy outage probability} (SOP) as performance metrics to measure the routing security. To improve security,
we let transmit nodes use different codebooks to retransmit the signal, so that the eavesdroppers cannot combine the wiretapped signals from multiple hops and could only decode these signals individually\footnote{The problem with colluding eavesdroppers requires the CDF of the sum of a number of independent but not identical distributed variables which subject to exponential distribution and whose number follows a PPP distribution, which is quite complicated. Due to the space limitation, here we only focus on the non-colluding cases.} \cite{TX_Z_DF_codebook}.
For an entire routing path, an end-to-end connection outage refers to the situation that the received SNR at any hop in the route is less than a predefined threshold $\gamma_c$, thus
the receiver cannot decode the message successfully and the corresponding probability of this event is called connection outage probability.
 %which can be expressed as
%\begin{equation}
%\mathcal{P}_{co}\triangleq\mathbb{P}\{\mathrm{SNR}_{R}<\gamma_c\}.
%\end{equation}
Secrecy outage occurs when the SNR of at least one eavesdropper at any hop surpasses the predefined threshold $\gamma_e$, hence the message can be intercepted by the eavesdropper(s). The probability of secrecy outage is called secrecy outage probability\footnote{The traditional SOP defined as $\mathbb{P}\{C_s<R_s\}$ is equivalent to our definition $\mathbb{P}\{SNR_e>\gamma_e\}$ when $\gamma_e\triangleq 2^{C_t -R_s}-1$.}.
The COP and SOP are denoted as $\mathcal{P}_\mathrm{co}$ and $\mathcal{P}_\mathrm{so}$, respectively.
%For an entire routing path, connection outage refers to the situation that the received SNR at any hop in the route is lower than $\gamma_c$, while secrecy outage refers to the case that the received SNR of any eavesdropper in the network at any hop is higher than $\gamma_e$. The corresponding COP and SOP are denoted as  $\mathcal{P}_{co}$ and $\mathcal{P}_{so}$.

 We consider the problem of finding the optimal routing path and transmit powers of each hop to achieve the lowest COP subject to
 a constraint that the SOP is no more than a predetermined value.
Denote $\Psi$ as the set of all feasible routing paths from the source to the destination and $\zeta$ as the maximum tolerable SOP, the optimization problem can be defined as:
\begin{equation} \label{Question}
\begin{split}
\min_{\Pi\in\Psi,P_{T_n}^{(\Pi)}}&\mathcal{P}_\mathrm{co}\\
\mathrm{s.t.}\ \mathcal{P}&_\mathrm{so}\leq\zeta.
\end{split}
\end{equation}
The objective function is equivalent to optimizing the route and transmit powers sequentially, that is
\begin{equation}\label{opt_prob}
\min_{\Pi\in\Psi,P_{T_n}^{(\Pi)}}\mathcal{P}_\mathrm{co}
=\min_{\Pi\in\Psi}\big(\min_{P_{T_n}^{(\Pi)}}\mathcal{P}_\mathrm{co}(\Pi)\big).
\end{equation}
%Due to the interference of jammer and the poor information of eavesdroppers, problem (\ref{Question}) is complex and hard to solve directly. Therefore,
Therefore, in the following section,
a secure routing method is proposed to solve this problem. The method can be divided into two parts: First, we optimize the transmit powers for any given route; then we find the optimal secure route from the source to the destination.
%using the expression of optimal powers.
%based on the obtained route, we optimize the transmit powers for each nodes selected.

\section{Power optimization and Secure Routing} \label{power_and_route}

In this section, we study problem (\ref{Question})
and propose a method finding a secure route
with power optimization strategy
for the considered multi-hop network.
The closed-form expressions of COP and SOP are derived first and
then be used to facilitate the optimization of transmit power.
%the process to optimize the transmit powers for a given path is discussed later.
Finally the optimal secure routing path is obtained.
 %the result is used to find the optimal routing path.
\subsection{Connection and Secrecy Outage Probabilities}

 First, we derive the exact expressions of COP and SOP for a given route.
 According to the definition of COP and the assumption of independent fading in Section II, the COP for route $\Pi$ denoted as $\mathcal{P}_\mathrm{co}(\Pi)$ can be written as
\begin{align}
\mathcal{P}_\mathrm{co}(\Pi)&=1-\prod_{l_n\in\Pi}\mathbb{P}\left\{\mathrm{SNR}_{R_n}>\gamma_c\right\}\nonumber\\
&=1-\prod_{l_n\in\Pi}\mathbb{P}\left\{\frac{P_{T_n}^{(\Pi)}|h_{{T_n}{R_n}}|^2}{d_{{T_n}{R_n}}^\alpha\sigma^2}>\gamma_c\right\}\nonumber\\
&=1-\prod_{l_n\in\Pi}\exp\left\{-\frac{\gamma_c\sigma^2d_{{T_n}{R_n}}^\alpha}{P_{T_n}^{(\Pi)}}\right\}, \label{pco_f}
\end{align}
where (\ref{pco_f}) holds since the fading coefficient $|h_{{T_n}{R_n}}|^2$ follows an exponential distribution with
$E\{|h_{{T_n}{R_n}}|^2\}=1$.

On the other hand, % ========assuming that the source and relaying nodes uses different codebooks, then according to ==[ref-tx zheng DF SECURITY]==, the signals transmitted by different transmitters can only be decoded by the eavesdroppers separately.==================
due to the usage of different codebooks at each hop and
since the distribution of eavesdroppers follows a homogeneous PPP, the SOP for route $\Pi$ denoted as
$\mathcal{P}_\mathrm{so}(\Pi)$ is given by
\begin{equation}
\mathcal{P}_\mathrm{so}(\Pi)=1-E_\Phi\left
\{\prod_{l_n\in\Pi}\prod_{E_i\in\Phi}\mathbb{P}(\mathrm{SNR}_{E_i}<\gamma_e)\right\}.
\end{equation}
To facilitate the derivation of a concise SOP expression,
the distribution of eavesdroppers for each hop is assumed to be uncorrelated with each other, which represents an upper bound of the original stationary eavesdroppers assumption \cite{independent_eves}.
Hence $\mathcal{P}_\mathrm{so}(\Pi)$ can be reformulated as
\begin{align}
&\mathcal{P}_\mathrm{so}(\Pi)=1-\prod_{l_n\in\Pi}\left\{E_\Phi\prod_{E_i\in\Phi}\mathbb{P}(\mathrm{SNR}_{E_i}<\gamma_e)
\right\}\nonumber \\
&=1-\prod_{l_n\in\Pi}\left\{E_\Phi\prod_{E_i\in\Phi}\left[1-\exp\left(-\frac{\gamma_e\sigma^2d_{{T_n}{E_i}}^\alpha}{P_{T_n}^{(\Pi)}}\right)\right]
\right\}\nonumber \\
&\overset{(a)}{=}1-\prod_{l_n\in\Pi}\left\{\exp\left[-\lambda_e\int_0^{2\pi}\int_0^\infty \exp\left(-\frac{\gamma_e\sigma^2r^\alpha}{P_{T_n}^{(\Pi)}}\right)rdrd\theta
\right]\right\}\nonumber\\
&\overset{(b)}{=}
1-\prod_{l_n\in\Pi}\exp\left\{
-\frac{\lambda_e 2\pi\Gamma(2/\alpha)}{\alpha}\left(\frac{\gamma_e\sigma^2}{P_{T_n}^{(\Pi)}}
\right)^{-\frac{2}{\alpha}}
\right\},\label{pso_f}
\end{align}
where $(a)$ holds for the probability generating functional lemma (PGFL) for the homogeneous PPP \cite{P12} under the assumption that the transmitter of each hop locates at the origin of the polar coordinate and $(b)$ holds for the
integration formula [\ref{integral}, 3.326.2] $\int_0^\infty x^m\exp(-\beta x^n)dx=\Gamma(\gamma)/(n\beta^\gamma)$ with $\gamma=(m+1)/n$.
%An essential property for a homogeneous PPP is the Probability Generating Functional Lemma (PGFL), which can be represented as
%\begin{equation}\label{PGFL}
%E_\Phi\left[\prod_{E_i\in\Phi}f(x_{E_i})\right]=\exp\left\{-\lambda_e\int_{R^2}1-f(x_{E_i})d(x_{E_i})
%\right\}
%\end{equation}
%%where $x_{E_i}$ denotes the location of $E_i$.
%%

Till now, we have obtained the expressions for  $\mathcal{P}_\mathrm{co}(\Pi)$ and $\mathcal{P}_\mathrm{so}(\Pi)$ in (\ref{pco_f}) and (\ref{pso_f}), respectively. %The two formulas indicate that the COP becomes higher with increasing distances between the transmitter and receiver nodes at each hop,
%and the SOP increases with denser density of eavesdroppers.
The two formulas indicate that the powers of transmitters has opposite influence for COP and SOP. A higher power leads to less communication outage while a higher probability of information leakage.
Therefore, when study the COP and SOP performance jointly, this trade-off needs to be considered and a careful design is required for the transmit powers.
For the sake of conciseness, defining $\omega\triangleq\frac{2\pi\lambda_e}{\alpha}\Gamma\left(\frac{2}{\alpha}\right)(\gamma_e\sigma^2)^{-\frac{2}{\alpha}}$
and $\psi_n\triangleq\gamma_c d_{{T_n}{R_n}}^\alpha\sigma^2$,
(\ref{pco_f}) and (\ref{pso_f}) can be simplified as
\begin{align}
\mathcal{P}_\mathrm{co}(\Pi)&=1-\exp\left(-\sum_{l_n\in\Pi}\frac{\psi_n}{P_{T_n}^{(\Pi)}}\right)\label{PCO}\ \ \text{and}\\
\mathcal{P}_\mathrm{so}(\Pi)&=1-\exp\left[-\omega\sum_{l_n\in\Pi} \left(P_{T_n}^{(\Pi)}\right)^{2/\alpha
}
\right],\label{PSO}
\end{align}
respectively.
%where $\omega\triangleq\frac{2\Pi\lambda_e}{\alpha}\Gamma\left(\frac{2}{\alpha}\right)(\gamma_e\sigma^2)^{-\frac{2}{\alpha}}$
%and $\psi_n\triangleq\gamma_c d_{{T_n}{R_n}}^\alpha\sigma^2$.

Based on (\ref{PCO}) and (\ref{PSO}), in the following part of this section, we try to solve (\ref{opt_prob}) under the constraint of SOP and propose the secure routing algorithm.

\subsection{Transmit Power Optimization}
Now we focus on optimizing the transmit powers of each hop to minimize the COP while satisfying the maximum tolerable SOP constraint.
The power optimization problem for any given route $\Pi$ can be written as
\begin{equation} \label{opt_prob1}
\begin{split}
\min_{P_{T_n}^{(\Pi)}}&\ \mathcal{P}_\mathrm{co}(\Pi) \\
\mathrm{s.t.}&\ \mathcal{P}_\mathrm{so}(\Pi)\leq\zeta.
\end{split}
\end{equation}
Substituting (\ref{PSO}) into the inequality constraint of (\ref{opt_prob1}), we have
\begin{equation}\label{inequal}
1-\exp\left[-\omega\sum_{l_n\in\Pi}\left(P_{T_n}^{(\Pi)}\right)^{2/\alpha
}
\right]\leq\zeta,
\end{equation}
which can be further transformed into
\begin{equation}
\omega\sum_{l_n\in\Pi} \left(P_{T_n}^{(\Pi)}\right)^{2/\alpha
}\leq\varepsilon\triangleq\ln\frac{1}{1-\zeta}.
\end{equation}

Notice that $\mathcal{P}_\mathrm{co}(\Pi)$ is a non-increasing function of $P_{T_n}^{(\Pi)}$. Since the expression on the left side of the inequality constraint is non-decreasing respect to $P_{T_n}^{(\Pi)}$,
this problem reaches its optimum when the inequality constraint is active at the optimal solution. As a result, we can safely replace
%the optimization problems with inequality and equality constraint are equivalent at the optimum.
the inequality sign with an equality sign and
(\ref{opt_prob1}) can be rewritten as
\begin{equation}
\begin{split}
\min_{P_{T_n}^{(\Pi)}}&\ \sum_{l_n\in\Pi}\frac{\psi_n}{P_{T_n}^{(\Pi)}}\\
\mathrm{s.t.}&\ \sum_{l_n\in\Pi} \omega\left(P_{T_n}^{(\Pi)}\right)^{2/\alpha}=\varepsilon. \label{equa_cons}
\end{split}
\end{equation}
%The convexity of this optimization problem is clear since both the object and constraint function are convex.

%========convexity of the problem=========
Problem (\ref{equa_cons}) is not convex since its constraint is not affine (except when $\alpha =2$, which represents propagation in free space). In order to reformulate (\ref{equa_cons}) into a convex form, defining $t_n\triangleq\left(P_{T_n}^{(\Pi)}\right)^{2/\alpha}$,  we have $P_{T_n}^{(\Pi)}=t_n^{{\alpha}/{2}},(P_n>0, t_n>0)$. Therefore,  (\ref{equa_cons}) can be rewritten as
\begin{equation}
\begin{split}
\min_{t_n}&\ \sum_{l_n\in\Pi}\frac{\psi_n}{t_n^{{\alpha}/{2}}}\\
\mathrm{s.t.}&\ \sum_{l_n\in\Pi} \omega t_n=\varepsilon. \label{equa_convex}
\end{split}
\end{equation}

Problem (\ref{equa_convex}) is a convex problem due to its convex objective function and affine equality constraint and its global optimum can be obtained.
Applying the Lagrange multiplier method associated with the equality constraint in (\ref{equa_convex}),
we have the following function:
\begin{align}
G\left(t_n,\xi\right)
=\sum_{l_n\in\Pi}\frac{\psi_n}{t_n^{{\alpha}/{2}}}+\xi\left[\omega\sum_{l_n\in\Pi} t_n-\varepsilon\right],
\end{align}
where $\xi$ is the Lagrange multiplier.
Then we set the partial derivatives of $G\left(t_n,\xi\right)$
respect to $t_n$ to zero, which yields
\begin{equation}\label{power1}
t_n=\left(\frac{\psi_n\alpha}{2\omega\xi}\right)^{\frac{2}{2+\alpha}}.
\end{equation}
Substituting (\ref{power1}) into the constraint in (\ref{equa_convex}), the expression of $\xi$ is derived as
%\begin{equation} \label{xi}
%\xi=\left[\frac{\omega}{\varepsilon}\sum_{l_n\in\Pi}\left(\frac{\psi_n\alpha}{2\omega}
%\right)^{\frac{2}{2+\alpha}}
%\right]^{\frac{2+\alpha}{2}}.
%\end{equation}
\begin{equation} \label{xi}
\xi=\frac{\alpha\omega^{\frac{\alpha}{2}}}{2}\left[\frac{1}{\varepsilon}\sum_{l_n\in\Pi}\psi_n^{\frac{2}{2+\alpha}} \right]^{\frac{\alpha+2}{2}}.
\end{equation}
Then substitute (\ref{xi}) into (\ref{power1}), and we have
\begin{equation}
t_n=\psi_n^{\frac{2}{\alpha+2}}\left[
\frac{\omega}{\varepsilon}\sum_{l_k\in\Pi}\psi_k^{\frac{2}{2+\alpha}}
\right]^{-1}.
\end{equation}
Finally, using $P_{T_n}^{(\Pi)}=t_n^{{\alpha}/{2}}$, we derive the optimal transmit power of hop $l_n$ as
\begin{equation}\label{trans_power}
P_{T_n}^{(\Pi)}=\psi_n^{\frac{\alpha}{\alpha+2}}\left[
\frac{\omega}{\varepsilon}\sum_{l_k\in\Pi}\psi_k^{\frac{2}{2+\alpha}}
\right]^{-\frac{\alpha}{2}}.
\end{equation}

The influence of density of eavesdroppers $\lambda_e$ and SOP constraint $\zeta$ on the transmit power $P_{T_n}^{(\Pi)}$ can be observed from (\ref{trans_power}).
The increase of $\lambda_e$ and the decrease of $\zeta$ lead to the decrease of $P_{T_n}^{(\Pi)}$.
This result is comprehensible since the decrease of $P_{T_n}^{(\Pi)}$ will lower the risk of information leakage, hence to guarantee the security under the existence of more eavesdroppers and satisfy a more stringent SOP constraint.

%============discussion on this results============

So far, we have solved the inner optimization of (\ref{opt_prob}). We still need to find the secure route with the minimum COP
from all possible paths in the multi-hop network.

\subsection{Optimal Route Selection }
Since the transmit power of any route under the SOP constraint
has the form shown in (\ref{trans_power}) and $\mathcal{P}_\mathrm{co}(\Pi)$ is expressed as (\ref{PCO}), the secure routing problem can be rewritten as
\begin{equation}
\Pi^*=\arg\min_{\Pi\in\Psi}\left\{1-\exp\left[-\left (\frac{\omega}{\varepsilon}\right)^{\frac{\alpha}{2}}\left(
\sum_{l_n\in\Pi}(\psi_n)^{\frac{2}{2+\alpha}}
\right)^{\frac{\alpha}{2}+1}
\right]\right\},\nonumber
\end{equation}
which is equivalent to
\begin{equation}\label{weight_opt}
\Pi^*=\arg\min_{\Pi\in\Psi}\sum_{l_n\in\Pi}(\psi_n)^{\frac{2}{2+\alpha}}.
\end{equation}

Considering $\psi_n^{\frac{2}{2+\alpha}}$ to be the weight of hop $l_n$, expression (\ref{weight_opt}) can be interpreted as to find the optimal route $\Pi^*$ which has the minimum sum of weights. This can be solved by the Dijkstra's algorithm effectively.
Having obtained the optimal route and calculated the transmit powers for all transmission nodes using (\ref{trans_power}), the minimum COP can be obtained via (\ref{PCO}). The whole optimization procedure is shown in Algorithm \ref{algorithm_1}.

%==========================================================

The relation of the optimal route and the system parameters of the network is worthy discussing. From (\ref{weight_opt}) we notice that the routing optimizing is determined by the weight $\psi_n$, which is independent of the information of eavesdroppers as well as the constraint on SOP.
 %and is only related to the distance between pairs of nodes.
This indicates that changing the density of eavesdroppers $\lambda_e$ in the network and SOP threshold $\zeta$ of the optimization problem does not impact the final selection of secure route. This can be interpreted from the following perspective. Since the distribution of eavesdroppers is homogeneous and the CSI along with the locations of eavesdroppers are unknown,
eavesdroppers appears homogeneously for any route path of the legitimate network.
As a result, the expected influence of eavesdroppers toward all options are equal, or we can say the information of eavesdroppers does not affect the optimal routing design.
Hence the constraint set by $\zeta$ which constrains the eavesdroppers has no affect on routing either.
  %is  the eavesdroppers,

\begin{algorithm}
\caption{Secure Routing Algorithm for Ad-hoc Network with PPP Distributed Eavesdroppers}
\label{algorithm_1}
\begin{algorithmic}[1]
\REQUIRE Network information, SNR thresholds $\gamma_c$ and $\gamma_e$, maximum tolerable SOP $\zeta$;
\ENSURE Optimal route $\Pi^*$, optimal transmit powers $P_{T_n}^*$, minimum connection outage probability $\mathcal{P}_\mathrm{co}^*(\Pi^*)$;\\
\STATE Calculate $(\psi_n)^{\frac{2}{2+\alpha}}$ for all possible transmission pairs and exploit them as weights;\\
\STATE Use the Dijkstra's algorithm to find the optimal route $\Pi^*$;\\
\STATE Calculate corresponding transmit powers $P_{T_n}^*$ for each transmitter on route $\Pi^*$ using (\ref{trans_power});\\
\STATE Calculate the minimal COP $\mathcal{P}_\mathrm{co}^*(\Pi^*)$ for route $\Pi^*$ using (\ref{PCO});\\
\RETURN $\Pi^*$, $P_{T_n}^*$, $\mathcal{P}_\mathrm{co}^*(\Pi^*)$;
\end{algorithmic}
\end{algorithm}

%===================================
%Now we analyze the computational complexity of Algorithm 1.
{\it{Computational complexity analysis:}} As we have derived the closed-form transmit power in (\ref{trans_power}),
 the vast majority of the computation complexity comes from the execution of the Dijkstra's algorithm, which has a complexity of $O(M^2)$ \cite{P11}. Thus the computation complexity of Algorithm 1 is in the order of $O(M^2)$.%, much lower than that of the exhaustive search.

\section{Transmission power optimization with friendly jamming}\label{jamming_problem}

%In section \ref{power_and_route}
%we have gained algorithm \ref{algorithm_1} to find the optimal route and transmit powers.
Algorithm \ref{algorithm_1} provides us an effective way to obtain the secure route and the optimal transmit powers under the existence of random distributed eavesdroppers.
In order to enhance the security performance further, we now
consider the existence of  a friendly jammer \cite{reviewer3_1}.
This scenario is often feasible in practical applications. Take the device-to-device (D2D) communication system as an example. A cellular base station can work as a friendly jammer to interfere with the interception of eavesdroppers and assist the legitimate communication among the D2D users.
Hence in this section, based on the optimal secure route obtained from Algorithm 1, friendly jamming is introduced
 and the transmit power optimization problem of each transmit node on the secure route is reconsidered.
 % on the basis of the optimal route obtained from Algorithm \ref{algorithm_1}.

\begin{figure}
\centering
\includegraphics[width=3.0in]{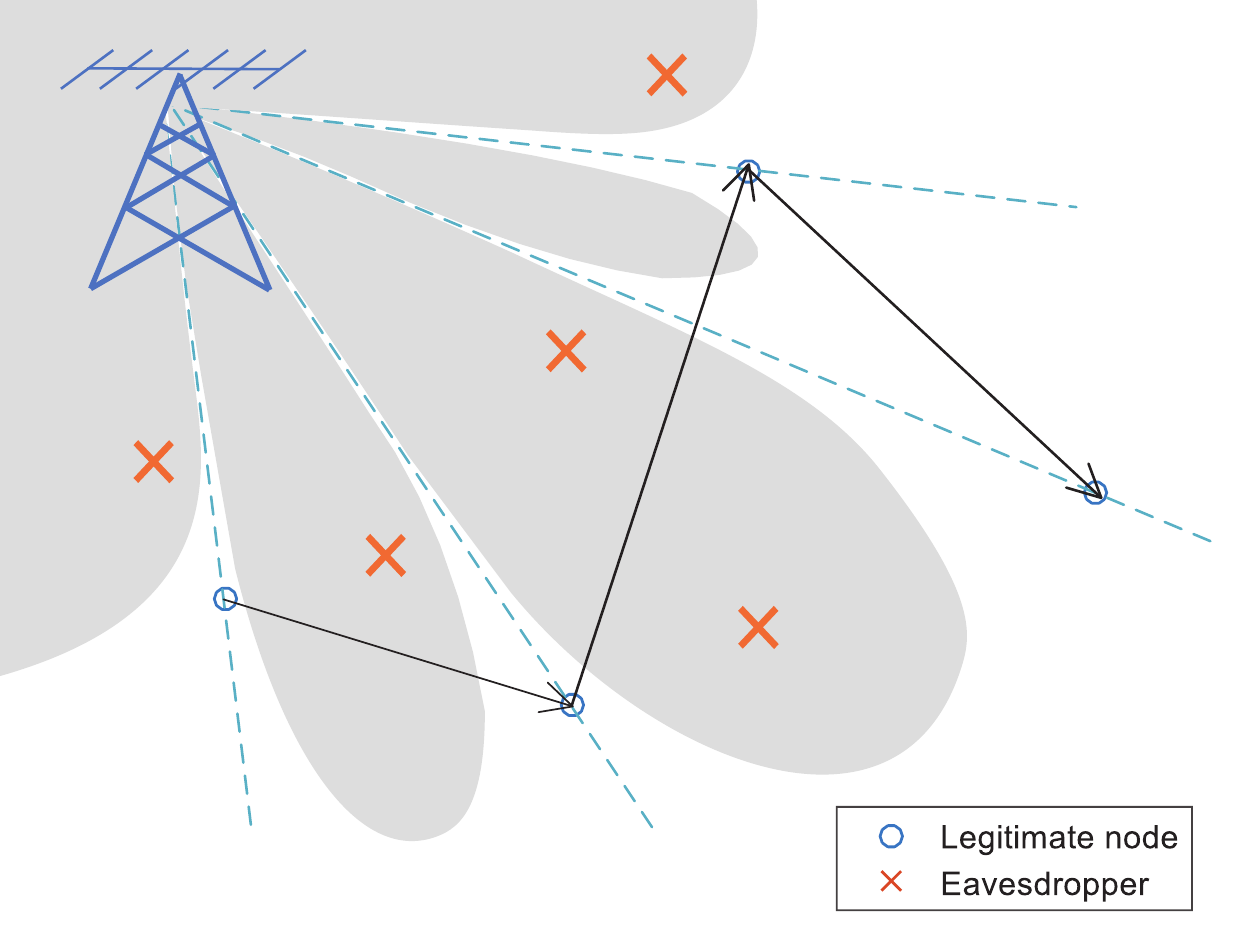}
\caption{Illustration of adopting friendly jamming for anti-eavesdropping.}
\label{jam_model}
\end{figure}

Based on the system model described in Section \ref{system_model}, we assume that there is a jammer equipped with multiple antennas in the network. The channels from the jammer and the transmitters to the legitimate receivers and to the malicious eavesdroppers are assumed uncorrelated with each other.
To secure the transmission, the friendly jammer radiates artificial noise
isotropically in the nullspace spanned by the channel vectors of the legitimate nodes to avoid interfering with the legitimate network.
%in every direction except where the legitimate nodes are to interfere the eavesdroppers and avoid the negative influence on the legitimate network.
%without affecting the legitimate network.
The system model is depicted in
Fig. \ref{jam_model}.
With the knowledge of channel fading coefficients from the jammer to the legitimate receiver $R_n$, denoted as ${\mathbf{h}_{JR_{n}}},n=1,...,N$, the jammer adjusts its beamforming weight vector $\mathbf{v}$ to
suppress the artificial noise to legitimate receiver according to %the following equation
%Using beamforming vector $\mathbf{v}$ and
\begin{equation} \label{nullify}
\mathbf{h}^{H}_{JR_{n}}\mathbf{v}=0 ,\ n=1,...,N.
\end{equation}

 Denote the optimal route obtained from Algorithm \ref{algorithm_1} as $\Pi^*$, the transmit power for node $T_n$ in $\Pi^*$ as $P_{T_n}$, the transmit power of jammer as $P_J$, and the transmit power of jammer as $P_J$ with its beamforming weight vector normalized as $\Vert\mathbf{v}\Vert^2=1$. Due to (\ref{nullify}),
 the expression of COP is identical to (\ref{pco_f}).
 We assume that the interference produced by the jammer is much larger than the noise, then the noise at the eavesdroppers can be neglected
 and the expression of SOP is given by
\begin{equation}\label{pso_f_j}
%\begin{split}
\mathcal{P}_\mathrm{so}(\Pi^*)\!=\!1-\!\prod_{l_n\in\Pi^*}\!\left\{ E_{\Phi}\prod_{E_i}\left[\mathbb{P}\left(\frac{P_{T_n}|h_{{T_n}{E_i}}|^2/d^{2}_{{T_n}{E_i}}}{P_J|\mathbf{h}^{H}_{JE_{i}}\mathbf{v} |^2/d^{\alpha}_{JE_i}}\!<\!\gamma_e\right) \right] \right\}.
%\end{split}
\end{equation}
In fact, expression (\ref{pso_f_j}) is an upper bound of the accurate SOP under jamming due to the neglect of noise, which represents a worst case of the exact value.

Due to %$\mathbf{v}$ is normalized as
$\Vert\mathbf{v}\Vert^2=1$ and the independence between $h_{{T_n}{E_i}}$ and $\mathbf{h}_{{J}{E_i}}$,
%============================================================
$|\mathbf{h}^H_{{T_n}{E_i}}\mathbf{v}|^2$ follows an exponential distribution with $E\{|\mathbf{h}^H_{{T_n}{E_i}}\mathbf{v}|^2\}=1$.
Therefore we have
%========= explanation of uncorreleted  channels between  jammer-node/ jammer- eaves=========
\begin{align}
&\mathbb{P}\left(\frac{P_{T_n}|h_{{T_n}{E_i}}|^2/d^{2}_{{T_n}{E_i}}}{P_J|\mathbf{h}^{H}_{JE_{i}}\mathbf{v} |^2/d^{\alpha}_{JE_i}}<\gamma_e\right)\nonumber\\
=&1 - E_{|\mathbf{h}^{H}_{JE_i}\mathbf{v}|^2}\left[\exp\left(-\frac{\gamma_e P_J|\mathbf{h}^{H}_{JE_i}\mathbf{v}|^2/d^{\alpha}_{JE_i}}{P_{T_n}/d^{\alpha}_{T_nE_i}}\right) \right]\nonumber\\
=&1-\frac{1}{1+\frac{\gamma_e P_J/d^\alpha_{JE_i}}{P_{T_n}/d^\alpha_{T_nE_i}}}.\label{pro_pso}
\end{align}
%\begin{align}
%&P\left(\frac{P_{T_n}|h_{{T_n}{E_i}}|^2/d^{2}_{{T_n}{E_i}}}{\sigma^2+P_J|\mathbf{h}^{H}_{JE_{i}}\mathbf{v} |^2/d^{\alpha}_{JE_i}}<\gamma_e\right)\nonumber\\
%=&1 - E_{|\mathbf{h}^{H}_{JE_i}\mathbf{v}|^2}\left[\exp\left(-\frac{\gamma_e\left(\sigma^2+P_J|\mathbf{h}^{H}_{JE_i}\mathbf{v}|^2/d^{\alpha}_{JE_i}\right)}{P_{T_n}/d^{\alpha}_{T_nE_i}}\right) \right]\nonumber\\
%=&1-\exp\left(-\frac{\gamma_e\sigma^2}{P_{T_n}/d^{\alpha}_{T_nE_i}}
%\right)\frac{1}{1+\frac{\gamma_e P_j/d^\alpha_{JE_i}}{P_{T_n}/d^\alpha_{T_nE_i}}}\label{pro_pso}
%\end{align}
Substituting (\ref{pro_pso}) into (\ref{pso_f_j}) and using PGFL, SOP can be expressed as
\begin{equation}\label{pso_simp}
%\begin{align}
\begin{split}
\mathcal{P}_{so}(\Pi^*)
=1-\!\prod_{l_n\in\Pi^*}\!\exp\left\{-\lambda_e\int_{R^2}{\frac{1}{1+\frac{\gamma_e P_J/d^\alpha_{JE_i}}{P_{T_n}/d^\alpha_{T_nE_i}}}dx_{E_i}}\right\}\\
%&=1-\prod_{l_n\in\Pi^*}\exp\left\{-\lambda_e\int_{\theta}\int_{r}{\frac{1}{1+\frac{\gamma_e P_J}{P_{T_n}}\frac{r^\alpha}{\left(r^2+d_{JT_n}^{2}-2r\cdot d_{JT_n}cos(\theta-\theta_{JT_n})\right)^{\alpha/2}}}}rdrd\theta\right\},
\end{split}
%\end{align}
\end{equation}
where $R^2$ denotes the distribution area of eavesdroppers,  and $x_{E_i}$ the location of $E_i$.
(\ref{pso_simp}) cannot be expressed in closed form due to the complexity of $d^\alpha_{T_nE_i}$ with respect to $x_{E_i}$.
%$d^\alpha_{T_nE_i}}$ has a complex form with respect to $x_{E_i}$ and cannot be
%., and the location of jammer can be represent as $(d_{JT_n},\theta_{JT_n})$. %The integral term is complex and cannot be expressed by a concise closed form.

The expressions of COP and SOP with friendly jamming
 have been derived in (\ref{pco_f}) and (\ref{pso_simp}). %In order to simplify the expressions, we define
%\begin{equation}
%g_n\left(P_{T_n}\right)\triangleq\int_{R^2}\frac{1}{1+\frac{\gamma_e P_J/d^\alpha_{JE_i}}{P_{T_n}/d^\alpha_{T_nE_i}}}dx_{E_i},
%\end{equation}
Then the transmit power optimization problem with the assistance of a multi-antenna jammer under a total transmit power constraint can be written as
\begin{equation}\label{opt_jam1}
\begin{split}
\min_{P_{T_n},P_J}\ &\left\{1-\prod_{l_n\in\Pi^*}\exp\left(-\frac{\gamma_c\sigma^2d_{{T_n}{R_n}}^\alpha}{P_{T_n}}\right)\right\}\\
\mathrm{s.t.}\ &1-\prod_{l_n\in\Pi^*}\exp\left[-\lambda_e \int_{R^2}\frac{1}{1+\frac{\gamma_e P_J/d^\alpha_{JE_i}}{P_{T_n}/d^\alpha_{T_nE_i}}}dx_{E_i}\right]\leq\zeta \\
&P_J+\sum_{l_n\in\Pi^*} P_{T_n}\leq P_{total}\ .
\end{split}
\end{equation}
Following the same procedures transforming (\ref{opt_prob1}) to (\ref{equa_cons}) and using $\psi_n$ %along with $g_n\left(P_{T_n}\right)$
for brevity, (\ref{opt_jam1}) is equivalent to
\begin{equation}
\begin{split}\label{opt_jam}
\max_{P_{T_n},P_J}&\ -\sum_{l_n\in\Pi^*}\frac{\psi_n}{P_{T_n}}\\
\mathrm{s.t.}&\ \sum_{{l_n}\in\Pi^*}\int_{R^2}\frac{1}{1+\frac{\gamma_e P_J/d^\alpha_{JE_i}}{P_{T_n}/d^\alpha_{T_nE_i}}}dx_{E_i}\leq \frac{\varepsilon}{\lambda_e}\\
&P_J+\sum_{l_n\in\Pi^*} P_{T_n}\leq P_{total}.
\end{split}
\end{equation}

Problem (\ref{opt_jam}) is not a convex optimization problem. % since the function on the left side of its constraint is concave.
%In fact, it can be classified as reverse convex programming which is difficult to solve in general.
Interestingly, however,
the objective function is monotonic respect to $P_{T_n}$,
under a fixed $P_J$, then %the objective function is monotonic respect to $P_{T_n}$, and
(\ref{opt_jam}) turns to a monotonic optimization problem under a fixed $P_J$. % and its global optimum can be solved using monotonic optimization theory.
Therefore, we propose an outer polyblock approximation algorithm to obtain the global optimal solution of the inner monotonic optimization problem under a fixed $P_J$, and the solution of (\ref{opt_jam}) can be derived by searching within the results obtained from the outer polyblock algorithm for different $P_J$.
Then we propose an SCA algorithm to reduce the complexity, at the price of obtaining a sub-optimal solution.
%propose two methods to solve (\ref{opt_jam}). One is based on an outer polyblock approximation algorithm to obtain the global optimal solution, and another is based on an SCA algorithm to reduce the complexity, at the price of obtaining a suboptimal solution.

%problem (\ref{opt_jam}) can be solved via two methods: 1) outer polyblock approximation algorithm based on monotonic optimization theory and 2) SCA algorithm considering the non-convexity characteristic of the problem. The detailed analysis is provided in the following sections.%and solution to problem (\ref{opt_jam}) is provided in the next section.

\section{Algorithms for Power Optimization with Jamming}

In this section, we present our methods
to solve problem (\ref{opt_jam}).
The method based on outer polyblock approximation and one-dimension search is proposed first,
%to obtain the global optimal solution for a fixed transmit power of the jammer, and the solution of (\ref{opt_jam})
and the SCA algorithm is put forward next to reduce the complexity.

\subsection{Outer Polyblock Approximation and One-dimention Search}
%The exact expression of SOP in the presence of a multi-antenna jammer and the transmit power optimization problem have been obtained in (\ref{pso_simp}) and (\ref{opt_jam}). %Since the SOP has a complex form, it is not easy to solve (\ref{opt_jam}) directly.

The power optimization for transmit nodes under a fixed $P_J$ is considered first. This problem can be written as:
\begin{equation}
\begin{split}\label{opt_jam_sub}
\max_{P_{T_n}}&\ -\sum_{l_n\in\Pi^*}\frac{\psi_n}{P_{T_n}}\\
\mathrm{s.t.}&\ \sum_{{l_n}\in\Pi^*}\int_{R^2}\frac{1}{1+\frac{\gamma_e P_J/d^\alpha_{JE_i}}{P_{T_n}/d^\alpha_{T_nE_i}}}dx_{E_i}\leq \frac{\varepsilon}{\lambda_e},\\
&\ \sum_{l_n\in\Pi^*} P_{T_n}\leq P_{total}-P_J,
\end{split}
\end{equation}
%In view of its monotony and non-convexity, two algorithms are proposed, namely the outer polyblock approximation algorithm and the SCA algorithm.
%Though (\ref{opt_jam_sub}) is not a convex optimization problem due to the non-convex constraint, it
which is a monotonic optimization problem with respect to $P_{T_n}, n=1,...,N$.
We aim to solve
(\ref{opt_jam_sub}) via the outer polyblock approximation algorithm based on the theory of monotonic optimization theory.
The solution obtained through the proposed iterative algorithm reaches the global optimum \cite{global_optimum}.
% =======cite[26] in chao wang

Now we rewrite (\ref{opt_jam_sub}) into a canonical form of monotonic optimization.
In order to simplify the expressions, we define
\begin{equation}
g_n\left(P_{T_n}\right)\triangleq\int_{R^2}\frac{1}{1+\frac{\gamma_e P_J/d^\alpha_{JE_i}}{P_{T_n}/d^\alpha_{T_nE_i}}}dx_{E_i},
\end{equation}
 and rewrite the optimization variables as a transmit power vector $\mathbf{p}=\{P_{T_1},P_{T_2},...,P_{T_N}\}$,
while the power region $\mathcal{R}$ is defined by
\begin{align}\label{define_region}
\hspace{-3mm}\mathcal{R}\triangleq\bigcup\Big\{\mathbf{p}:\sum_{l_n\in\Pi^*}g_n(P_{T_n})\leq&\frac{\varepsilon}{\lambda_e},\sum_n P_{T_n}\leq P_{total}-P_J,\nonumber\\
&P_{T_n}\geq 0, n=1,...,N \Big\}.
\end{align}
Based on the above definitions,
problem (\ref{opt_jam_sub})
can be written in the following form:
\begin{equation} \label{opt_jam_vector}
\begin{split}
\max\ &U(\mathbf{p})\triangleq \sum_{l_n\in\Pi^*} -\frac{\psi_n}{P_{T_n}}\\
\mathrm{s.t.}\ &\mathbf{p} \in \mathcal{R}.
\end{split}
\end{equation}
In the sequel, we aim to solve (\ref{opt_jam_vector}). The polyblock algorithm is proposed to obtain its globally optimal solution.

%\subsection{Preliminaries}
{\it{1) Preliminaries:}}\ \
In this subsection, we explain that problem (\ref{opt_jam_vector}) is a monotonic optimization problem. %and can be solved by outer polyblock approximation algorithm.
First, several definitions are listed as follows to facilitate the presentation \cite{global_optimum,P10,polyblock_2,rui_zhang_polyblock}.

%======cite more papers on polyblock=================

\begin{definition}
Given any two vectors $\mathbf{x},\mathbf{x'}\in \mathbb{R}^n$, $\mathbf{x'}\geq \mathbf{x}$ denotes that $x'_i\geq x_i,\forall i=1,...,n$. If $\mathbf{x'}\geq \mathbf{x}$ and $x'_i> x_i$, % satisfies for any
$\exists i=1,...,n$, we say $\mathbf{x'}$ \emph{dominates} $\mathbf{x}$;
If $x'_i > x_i,\forall i=1,...,n$, we say $\mathbf{x'}$ \emph{strictly dominates} $\mathbf{x}$ and write $\mathbf{x'}>\mathbf{x}$.
\end{definition}

\begin{definition}
Function $f: \mathbb{R}^n \rightarrow \mathbb{R}$ is
called an
%said to be an
\emph{increasing} function on $\mathbb{R}_+^n$ if for two vectors $\mathbf{x'},\mathbf{x}\in R^n_+$,
$f(\mathbf{x'}) \geq f(\mathbf{x})$ can be implied from $\mathbf{x'} \geq \mathbf{x}$.
Function $f$ is called \emph{strictly increasing} if
for any two vectors $\mathbf{x'} \neq \mathbf{x}$, $f(\mathbf{x'}) >f(\mathbf{x})$ can be implied from $\mathbf{x'} \geq \mathbf{x}$.
\end{definition}

\begin{definition}\label{normal}
Set $\mathcal{D}\in\mathbb{R}^n_+$ is a \emph{normal} set if for all $\mathbf{x}\in\mathcal{D}$,
any points $\mathbf{x'}$ dominated by $\mathbf{x}$ also belongs to $\mathcal{D}$.

%with $\mathbf{0\leq x'\leq x}$ satisfy that $\mathbf{x'}\in\mathcal{D}$.
\end{definition}

\begin{definition}
A point $\mathbf{x} \in \mathbb{R}^n_+$ is said to be an \emph{upper boundary point} of a compact normal set $\mathcal{D}$ if $\mathbf{x}\in \mathcal{D}$ and no point in $\mathcal{D}$ strictly dominates $\mathbf{x}$. All the upper boundary points of $\mathcal{D}$ constitute the \emph{upper boundary} of $\mathcal{D}$,
which is denoted by $\partial^+\mathcal{D}$.
\end{definition}

\begin{definition}
For vector $\mathbf{v}\in\mathbb{R}^n_+$, the hyper rectangle $\mathbf{[0,v]=\{x|0\leq x \leq v}\}$ is called a box with $\mathbf{v}$ being its vertex. The union of a finite number of boxes is referred to as a polyblock.
\end{definition}

Now, we provide some important results of optimization problems based on polyblock via the following proposition.

\begin{proposition}\label{max_on_vertice}
 A strictly increasing function $f(\mathbf{x})$
 reaches its maximal value over a polyblock at one vertice of the polyblock.
\end{proposition}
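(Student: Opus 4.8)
The plan is to exploit the monotonicity of $f$ together with the structural definition of a polyblock as a finite union of boxes. The guiding observation is that within a single box $[\mathbf{0},\mathbf{v}]$ every point is dominated by the vertex $\mathbf{v}$, so monotonicity alone forces the maximum onto that vertex; the polyblock case then follows by taking a finite maximum over the constituent boxes. No compactness or continuity machinery is needed, since the required inequalities hold pointwise.

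First I would dispose of the single-box case. Let $B=[\mathbf{0},\mathbf{v}]$ be a box with vertex $\mathbf{v}$. By the definition of a box, every $\mathbf{x}\in B$ satisfies $\mathbf{0}\leq\mathbf{x}\leq\mathbf{v}$, so in particular $\mathbf{v}\geq\mathbf{x}$. Invoking the increasing property of $f$ from Definition 2, this domination yields $f(\mathbf{v})\geq f(\mathbf{x})$ for every $\mathbf{x}\in B$. Consequently $\max_{\mathbf{x}\in B}f(\mathbf{x})=f(\mathbf{v})$, and the maximum is attained at the vertex $\mathbf{v}$. I would emphasize that only the (non-strict) increasing property is used here and that the conclusion holds regardless of continuity, because the monotonicity inequality is pointwise.

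Next I would lift this to the polyblock. Writing the polyblock as $P=\bigcup_{k=1}^{K}[\mathbf{0},\mathbf{v}_k]$ with vertices $\mathbf{v}_1,\dots,\mathbf{v}_K$, any $\mathbf{x}\in P$ lies in some box $[\mathbf{0},\mathbf{v}_k]$, whence the single-box result gives $f(\mathbf{x})\leq f(\mathbf{v}_k)\leq\max_{1\leq j\leq K}f(\mathbf{v}_j)$. Since each vertex $\mathbf{v}_k$ itself belongs to $P$, this upper bound is achieved, so $\max_{\mathbf{x}\in P}f(\mathbf{x})=\max_{1\leq k\leq K}f(\mathbf{v}_k)$, realized at $\mathbf{v}_{k^*}$ with $k^*=\arg\max_{k}f(\mathbf{v}_k)$. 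This is exactly the assertion that $f$ attains its maximum over $P$ at one of its vertices.

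I do not anticipate a genuine obstacle: the entire argument reduces to a direct application of the domination and monotonicity definitions combined with the finiteness of the number of boxes. The only point requiring mild care is to justify that the maximum exists and is attained \emph{without} appealing to continuity or Weierstrass-type arguments; the reduction to a finite set of vertex values already guarantees attainment, so the ``hard part,'' such as it is, amounts to stating the passage from the union to a finite maximum cleanly.
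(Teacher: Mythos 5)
Your proof is correct, but it takes a different route from the paper's. The paper argues by contradiction: it supposes the maximum is attained at a point $\mathbf{x}$ that is not a vertex, observes that some vertex $\mathbf{x'}$ must then dominate $\mathbf{x}$, and invokes \emph{strict} monotonicity to conclude $f(\mathbf{x'})>f(\mathbf{x})$, contradicting optimality. You argue directly: inside each box $[\mathbf{0},\mathbf{v}_k]$ every point is dominated by the vertex $\mathbf{v}_k$, so $f(\mathbf{x})\le f(\mathbf{v}_k)$, and the polyblock case reduces to a finite maximum over the vertices $\mathbf{v}_1,\dots,\mathbf{v}_K$. Your version buys two things the paper's does not: it uses only the non-strict increasing property, so the conclusion actually holds for any increasing $f$ (strictness is needed only if one wants to conclude that \emph{every} maximizer is a vertex); and it establishes that the maximum exists and is attained, whereas the paper's contradiction argument tacitly presupposes a maximizer before showing it must be a vertex. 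The paper's argument, in exchange, is shorter and gives the slightly stronger by-product that under strict monotonicity no non-vertex point can be optimal. Both proofs ultimately rest on the same structural fact --- every point of a polyblock is dominated by one of its vertices --- but yours makes the box decomposition and the finite reduction explicit rather than routing them through a contradiction.
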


\begin{proof}
Suppose that $f(\mathbf{x})$ attains the global maximum at $\mathbf{x}$  which is not a vertex of the polyblock, then there must exists one vertex $\mathbf{x'}$ dominating $\mathbf{x}$, i.e, $\mathbf{x'}\geq \mathbf{x}$ and $x_i'>x_i, \exists i=1,...,n$.
$f(\mathbf{x'})>f(\mathbf{x})$ holds since $f(\mathbf{x})$ is strictly increasing, which is contradicted against the assumption that $\mathbf{x}$ reaches the optimum.
\end{proof}

Based on the definitions and the proposition above, we have the following proposition.
 %which can be \hl{easily verified} to be true. %(\ref{opt_aim1}) and (\ref{opt_constr1}).
\begin{proposition}
 Optimization problem (\ref{opt_jam_vector}) is a monotonic problem which has an increasing objective function with respect to $\mathbf{p}$ and the power region $\mathcal{R}$ is a compact normal set.
\end{proposition}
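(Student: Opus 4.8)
The plan is to establish the two assertions of the proposition separately: that the objective $U(\mathbf{p})$ is an increasing function in the sense of the increasing-function definition given above, and that the region $\mathcal{R}$ defined in (\ref{define_region}) is both compact and normal in the sense of Definition~\ref{normal}. I would treat the objective first, since it is the easier of the two, and then turn to the structural properties of $\mathcal{R}$.

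For the objective, since each $\psi_n = \gamma_c d_{T_n R_n}^\alpha \sigma^2 > 0$, the single-term map $P_{T_n} \mapsto -\psi_n/P_{T_n}$ is strictly increasing on $P_{T_n} > 0$, as its derivative $\psi_n/P_{T_n}^2$ is positive. Hence for any $\mathbf{p}' \geq \mathbf{p}$ one has $-\psi_n/P'_{T_n} \geq -\psi_n/P_{T_n}$ termwise, and summing over the hops of $\Pi^*$ gives $U(\mathbf{p}') \geq U(\mathbf{p})$, which is precisely the defining property of an increasing (indeed strictly increasing) function.

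For the region, compactness follows from boundedness and closedness. Boundedness is immediate because the constraints $P_{T_n} \geq 0$ together with $\sum_n P_{T_n} \leq P_{total} - P_J$ confine $\mathbf{p}$ to a bounded simplex-like set. Closedness follows because each $g_n(\cdot)$ is continuous, so $\mathcal{R}$ is a finite intersection of preimages of closed half-lines under continuous maps. The crux is normality: I would take any $\mathbf{p} \in \mathcal{R}$ and any $\mathbf{p}'$ with $\mathbf{0} \leq \mathbf{p}' \leq \mathbf{p}$ and verify the three constraints for $\mathbf{p}'$. The nonnegativity and sum-power constraints are inherited trivially from $P'_{T_n} \leq P_{T_n}$ termwise. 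The only nontrivial constraint is the integral (SOP) one, and the key step is to show that $g_n(P_{T_n})$ is itself increasing in $P_{T_n}$: rewriting the integrand as $1/(1 + c_i/P_{T_n})$ with $c_i = \gamma_e P_J d^\alpha_{T_n E_i}/d^\alpha_{JE_i} > 0$, the integrand increases pointwise in $P_{T_n}$, so the integral $g_n$ does as well. Consequently $g_n(P'_{T_n}) \leq g_n(P_{T_n})$ and $\sum_n g_n(P'_{T_n}) \leq \varepsilon/\lambda_e$, placing $\mathbf{p}' \in \mathcal{R}$ and confirming normality.

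I expect the main obstacle to be the normality verification rather than compactness or the objective, specifically confirming that shrinking all powers can only relax the integral constraint. This reduces to the pointwise monotonicity of the integrand in $P_{T_n}$, which becomes transparent once the integrand is written in the $1/(1 + c_i/P_{T_n})$ form, but it is the one step where the particular structure of $g_n$ genuinely matters and must be argued with care, including noting that the monotonicity is preserved under integration over the eavesdropper region.
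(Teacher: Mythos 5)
Your proof is correct and follows essentially the same route as the paper's: both rest on the monotonicity of $U(\mathbf{p})$ and of each $g_n(P_{T_n})$, from which normality of $\mathcal{R}$ follows, with compactness coming from the constraint structure. The paper states these facts as "clear" in two sentences, whereas you supply the supporting details (the derivative of $-\psi_n/P_{T_n}$, the pointwise monotonicity of the integrand $1/(1+c_i/P_{T_n})$, and the bounded-plus-closed argument), so yours is simply a more careful write-up of the identical argument.
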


\begin{proof}
It is clear that $U(\mathbf{p})$ and $g_n(P_{T_n})$ are both  increasing functions of $P_{T_n}$. Therefore, $\mathcal{R}$ is a normal set obviously accoding to definition \ref{normal} and the constraints in (\ref{define_region}) define $\mathcal{R}$ as a compact set.
\end{proof}

{\it{2) Outer Polyblock Generation:}}\ \
%\subsection{}%=======================================
Proposition \ref{max_on_vertice} reveals that the maximum of an increasing function can be found via searching among the vertices of the polyblock. Thus for a monotonic optimization problem, we can gradually approach its region by iteratively generating a series of polyblocks
and find its maximum via searching.

In the following paragraphs, a method to generate the polyblocks %meeting conditions (\ref{poly_condition1}) and (\ref{poly_condition2})
is provided.
First, we aim to find the vertex achieving the maximal value of $U(\mathbf{z})$ on the polyblock.
We use $\mathcal{P}^{(k)}$ to denote the polyblock generated at the $k$-th iteration, $Z^{(k)}$ the vertex set of the polyblock $\mathcal{P}^{(k)}$,
 then
 the vertex maximizing $U(\mathbf{z})$ denoted as $\tilde{\mathbf{z}}^{(k)}$
% should satisfy
%\begin{equation} \label{max_vertex}
%\tilde{\mathbf{z}}^{(k)}=\arg\max_{\mathbf{z}\in Z^{(k)}}U(\mathbf{z}),
%\end{equation}
%where $\tilde{\mathbf{z}}^{(k)}$
can be found by searching in set $ Z^{(k)}$.

%==========================================
%adopt the projection along the ray through the origin to
Then we
project $\tilde{\mathbf{z}}^{(k)}$ onto the upper boundary of $\mathcal{R}$ along the line segment through the origin to $\tilde{\mathbf{z}}^{(k)}$
and get the intersection point $\mathbf{r}^{(k)}$. %======================
Denoting the $n$th element of vector $\tilde{\mathbf{z}}^{(k)}$ as $\tilde{z}^{(k)}_n$ and the scaling parameter as $\delta_k,0\leq\delta_k\leq1$, the projection operation can be represented as solving the following optimization problem
\begin{equation}\label{compute_delta}
\begin{split}
\max_{\delta_k}\ &\sum_n\left(-\frac{\psi_n}{\delta_k \tilde{z}^{(k)}_n}\right) \\
\mathrm{s.t.}\ &\sum_n{g_n}\left(\delta_k \tilde{z}^{(k)}_n\right)\leq\frac{\varepsilon}{\lambda_e}\\
&\delta_k \sum_n\tilde{z}^{(k)}_n\leq P_{total}-P_J.
\end{split}
\end{equation}
%which can be solved by the bisection method.
The intersection point $\mathbf{r}^{(k)}$ can be calculated by $\mathbf{r}^{(k)}=\delta_k\tilde{\mathbf{z}}^{(k)}$,
%\begin{equation} \label{compute_intersection_point}
%\mathbf{r}^{(k)}=\delta_k\tilde{\mathbf{z}}^{(k)},
%\end{equation}
and the new vertices adjacent to $\tilde{\mathbf{z}}^{(k)}$ are generated according to
\begin{equation}\label{compute_new_vertex}
\mathbf{z}^{(k),n}=\tilde{\mathbf{z}}^{(k)}-\left(\tilde{z}^{(k)}_n-r^{(k)}_n\right)\mathbf{e}_n,\ n=1,...,N,
\end{equation}
where $\mathbf{z}^{(k),n}$ denotes the $n$th new vertex generated at the $k$th iteration, $r^{(k)}_n$ denotes the $n$th element of $\mathbf{r}^{(k)}$ and $\mathbf{e}_n$ is the $n$th column of the identity matrix of size $N$. Then the new vertex set is defined by
\begin{equation}\label{vertex_generate}
Z^{(k+1)}=Z^{(k)}\setminus\tilde{\mathbf{z}}^{(k)}\bigcup\left\{\mathbf{z}^{(k),1},...,\mathbf{z}^{(k),N}\right\}.
\end{equation}
The new polyblock $\mathcal{P}^{(k+1)}$ is the union of the boxes defined by vertices in $Z^{(k+1)}$. An illustration of the generation procedure is depicted in Fig. \ref{polyblock}.

\begin{figure}
\centering
\includegraphics[width=3.0in]{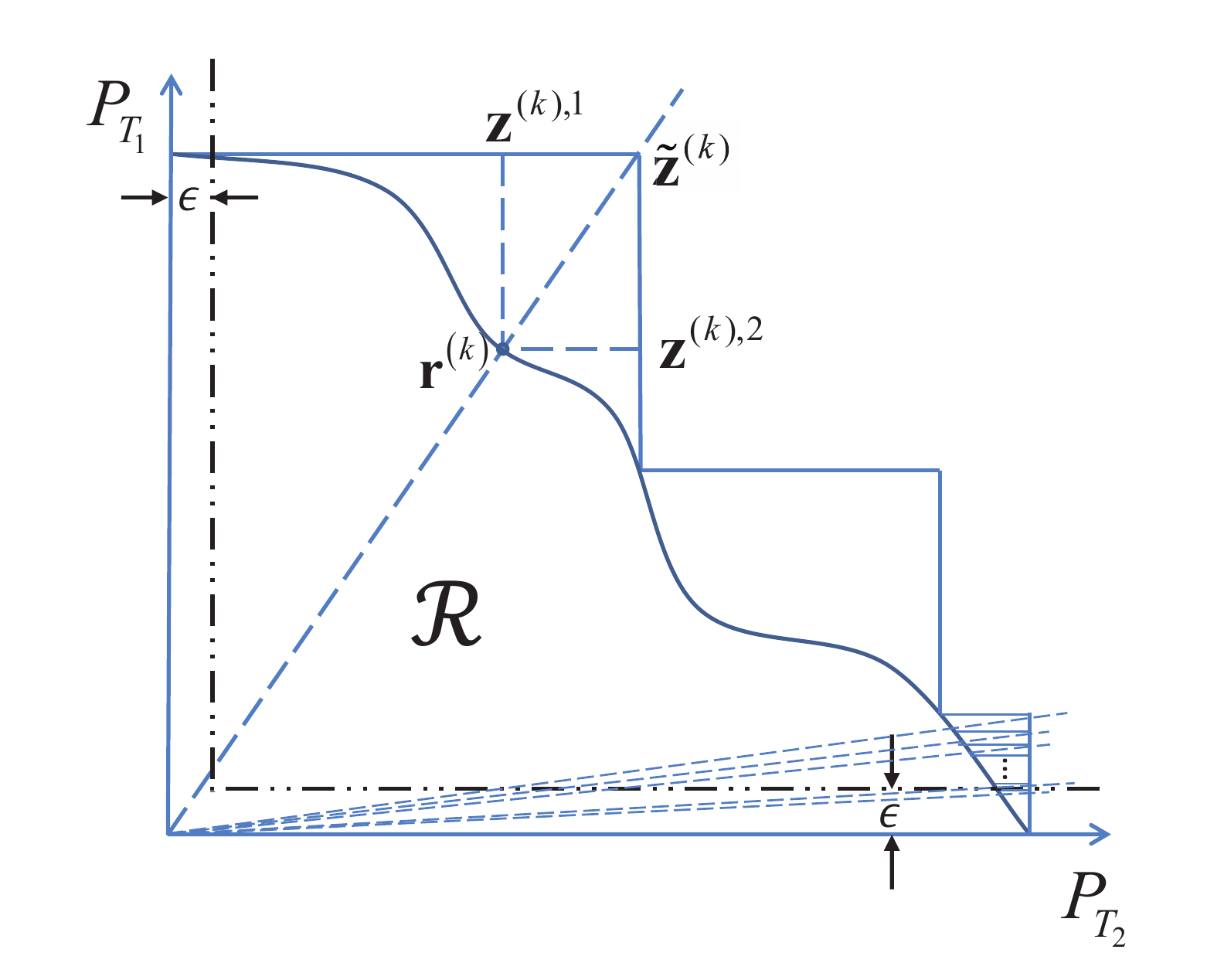}
\caption{Illustration of a procedure in polyblock generating when $\mathcal{R}$ is a noncovex but normal set.}
\label{polyblock}
\end{figure}

{\it{3) Outer Polyblock Approximation Algorithm:}}\ \
%\subsection{Outer Polyblock Approximation Algorithm}%===============================
Based on the polyblock generation method, an iterative algorithm is proposed to obtain the optimal solution for problem (\ref{opt_jam_vector}).
The algorithm starts from calculating the initial vertex $\mathbf{z}^{(1)}$ for the first iteration. It is clear that the initial vertex should be the upper bound of the problem so that the box $[\mathbf{0}, \mathbf{z}^{(1)}]$ could cover the power region $\mathcal{R}$.
Obviously an upper bound is achieved when %for the sum of $g_n$ of the entire route
$\sum{g_n}\left( P_{T_n}\right)\leq \frac{\varepsilon}{\lambda_e}$ and $\sum P_{T_n}\leq P_{total}-P_J$ are relaxed for each item separately, which can be written more specifically as
\begin{equation} \label{relax_constr}
\begin{split}
\max_{P_{T_n}}&\ -\sum_{l_n\in\Pi^*}\frac{\psi_n}{P_{T_n}}\\
\mathrm{s.t.}&\  g_n(P_{T_n})\leq\frac{\varepsilon}{\lambda_e},\\
&\ P_{T_n}\leq P_{total}-P_J,\ n=1,...,N.
\end{split}
\end{equation}
%========is this a relaxation?   clarify it! ==============
%which can be solved directly.
The solution of (\ref{relax_constr}) acts as the initial vertex $\mathbf{z}^{(1)}$. Note that the selection of the initial point does not impact the final results since the solution of the outer polyblock approximation algorithm always converges to the global optimum.

In the $k$th iteration,
the optimal vertex $\tilde{\mathbf{z}}^{(k)}$ is first derived by
searching in vertex set $Z^{(k)}$, % according to (\ref{max_vertex}).
and the corresponding maximal value over polyblock $\mathcal{P}^{(k)}$ is denoted as
$U(\tilde{\mathbf{z}}^{(k)})$. %, which is also the maximum over polyblock $\mathcal{P}^{(k)}$. %according to Proposition \ref{max_on_vertice}.
Then the scaling parameter $\delta_k$ and the
intersection point $\mathbf{r}^{(k)}$ on the upper boundary of $\mathcal{R}$ is derived by solving (\ref{compute_delta}).
The optimal intersection point till the $k$th iteration $\tilde{\mathbf{r}}^{(k)}$
is obtained via %the following expression
\begin{equation}\label{best_intersection point}
\tilde{\mathbf{r}}^{(k)}= \arg\max\{U(\mathbf{r}^{(k)}),U(\tilde{\mathbf{r}}^{(k-1)})\}.
\end{equation}
$U(\tilde{\mathbf{z}}^{(k)})$ and  $U(\tilde{\mathbf{r}}^{(k)})$ are the upper and lower bound of the optimal value $U(\mathbf{r^*})$ respectively, thus
$U(\mathbf{r}^*)-U(\tilde{\mathbf{r}}^{(k)})<U(\tilde{\mathbf{z}}^{(k)})-U(\tilde{\mathbf{r}}^{(k)})$
is always satisfied.
% the difference of the two bounds
% $U(\tilde{\mathbf{z}}^{(k)})-U(\tilde{\mathbf{r}}^{(k)})$
%always satisfies the following inequality
%\begin{equation}
%U(\mathbf{r}^*)-U(\tilde{\mathbf{r}}^{(k)})<U(\tilde{\mathbf{z}}^{(k)})-U(\tilde{\mathbf{r}}^{(k)}),
%\end{equation}
 If $U(\tilde{\mathbf{z}}^{(k)})-U(\tilde{\mathbf{r}}^{(k)})$ is lower than a predefined number $\eta$,
$U(\mathbf{r}^*)$ is greater than $U(\tilde{\mathbf{r}}^{(k)})$ by no more than $\eta$. We quit the iteration and called
 $\tilde{\mathbf{r}}^{(k)}$ an $\eta$-optimal solution to problem (\ref{opt_jam_vector}). Otherwise, a new polyblock $\mathcal{P}^{(k+1)}$ is generated and the above procedure is repeated till
an $\eta$-optimal solution is obtained.

\begin{algorithm}
  \caption{The polyblock algorithm for transmit power optimization problem}
  \label{poly_algorithm}
  \begin{algorithmic}[1]
  \REQUIRE $\psi_n$, accuracies $\epsilon$ and $\eta$
  \ENSURE $\tilde{P}_{T_n}^*$
  \STATE Initialization: Set $k=1$.
  Find the initial vertex $\mathbf{z}^{(1)}$ by solving
  (\ref{relax_constr}) and set the initial vertex set as $Z^{(1)}=\{\mathbf{z}^{(1)}\}$;
%using constraint

  \WHILE{$(\epsilon,\eta)$-accuracy is not satisfied,}
  \STATE Obtain the optimal vertex $\tilde{\mathbf{z}}^{(k)}$ by searching in set $Z_{\epsilon}^{(k)}$, and compute $U(\tilde{\mathbf{z}}^{(k)})$; %with $\epsilon$ being a small positive number and $Z_\epsilon^{(k)}=\{\mathbf{z}\in Z^{(k)}|z_k\geq\epsilon,\forall k\}$;
%  \STATE Compute the maximum $U(\tilde{\mathbf{z}}^{(k)})$;
  \STATE Obtain the interection point $\mathbf{r}^{(k)}$ on the upper boundary of $\mathcal{R}$ by (\ref{compute_delta}); %and (\ref{compute_intersection_point});
  \STATE Obtain %the optimal intersection point up till the $k$th iteration
  $\tilde{\mathbf{r}}^{(k)}$ by (\ref{best_intersection point}) and compute $U(\tilde{\mathbf{r}}^{(k)})$;
  \IF{$U(\tilde{\mathbf{z}}^{(k)})-U(\tilde{\mathbf{r}}^{(k)})\leq\eta$,}
  \STATE $\tilde{\mathbf{r}}^{(k)}$ is an $(\epsilon,\eta)$-optimal solution, $\tilde{P}_{T_n}^*=\tilde{r}^{(k)}_{n}, n=1,...,N$;
  \ELSE
  \STATE Generate the $(k+1)$-th polyblock $\mathcal{P}^{(k+1)}$ and vertex set $Z_{\epsilon}^{(k+1)}$ by (\ref{compute_new_vertex}) and (\ref{vertex_generate});
  \ENDIF
  \STATE $k=k+1$;
  \ENDWHILE
  \end{algorithmic}
\end{algorithm}

Suppose that the optimal solution is located in the region defined by $\{\mathbf{r}^*|0\leq r^*_{n}\leq \epsilon, 1\leq n \leq N\}$ with $\epsilon$ being a small positive number, then the polyblock algorithm would converges with a fairly low speed as $\tilde{\mathbf{z}}^{(k)}$ gradually approaching this region, as depicted in Fig.\ref{polyblock}. Thus, in order to guarantee the convergence speed of the algorithm, we replace the region $Z^{(k)}$ by $Z_{\epsilon}^{(k)}\triangleq \{\mathbf{z}\in Z^{(k)}|z_k\geq\epsilon,\forall k\}$. The parameter $\epsilon $
reflects the tradeoff between the accuracy and computational complexity.

The procedure for solving (\ref{opt_jam_vector}) is summarized in Algorithm \ref{poly_algorithm}.
The convergence explanation can be found in  [\ref{math_poly_algo}, Theorem 1].
 Given the accuracies $\epsilon$ and $\eta$, the proposed algorithm will terminate after a finite number of iterations and an $(\epsilon,\eta)$-approximate optimal solution for problem (\ref{opt_jam_vector}) can be derived.

So far, we have solved (\ref{opt_jam_sub}) through
 outer polyblock approximation algorithm
 and obtained the optimal powers of route $\Pi^*$
under a fixed $P_J$. %$\tilde{P}_{T_n}^*$ ($\bar{P}_{T_n}^*$)
By varying the value of $P_J$, a series of solutions for (\ref{opt_jam_sub}) under different $P_J$ can be derived, and the solution of (\ref{opt_jam}) can be derived through searching within these solutions.

{\it{ Computational complexity analysis:}}
It is clear that the searching precision of $P_J$ is influential to the computational complexity.
The complexity of the polyblock algorithm is sensitive to the values of $\epsilon$ and $\eta$ which influence the number of iterations of Algorithm 2.
In each iteration, the accuracy requirement of the bisection method for finding the projection point of the best vertex on the upper boundary $\mathcal{R}$ also affects the complexity.
%In each iteration, most execution time is spent on finding the projection point of the best vertex on the upper boundary $\mathcal{R}$.
%Thus the accuracy requirement of the bisection method also affects the complexity.
The number of hops $N$, which represents the dimensionality of the problem, is another key factor.
%=====================================
Since the
size of vertex set increases $(N-1)$ after each iteration, the time of obtaining the optimal vertex also increases linearly, leading to a larger complexity for one iteration.
At the $k$th iteration, the optimal vertex $\tilde{\mathbf{z}}^{(k)}$ needs to be found from $(kN-k-N+2)$
vertices.
Supposing the polyblock algorithm reaches the $(\epsilon,\eta)$-optimal solution at the $\tilde{k}$th iteration, the complexity for obtaining the optimal vertices is in the order of $O(N\tilde{k}^2)$. Then the total complexity considering the one-dimension search among $\kappa$ different values of $P_J$ is in the order of $\kappa\times O(N\tilde{k}^2)$.

%====================================

\subsection{Successive Convex Approximation Algorithm}

%As shown in Section IV, problem (\ref{opt_jam}) is difficult to solve due to its non-convex constraint.
Though we have solved (\ref{opt_jam})
through an outer polyblock approximation with one-dimension search method, obtaining the global optimum is time-consuming. In this section, we use a sequence of convex problems to approximate this non-convex optimization problem (\ref{opt_jam}) based on the SCA method and solve the problem efficiently. The solution obtained from the SCA method turns out to be a local optimum \cite{reviewer3_2}.

First, we define
$F_n(x_{E_i})\triangleq\frac{\gamma_e d^\alpha_{T_nE_i}}{d^\alpha_{JE_i}}$ for brevity and set $c_n\triangleq\frac{P_{T_n}}{P_J}$, and (\ref{opt_jam}) can be rewritten as
 \begin{subequations}\label{sca_change_variable}
\begin{align}
\min_{c_n,P_J}&\ \sum_{l_n\in\Pi^*}\frac{\psi_n}{c_n P_J}\\
\mathrm{s.t.}&\ \sum_{{l_n}\in\Pi^*}\int_{R^2}\frac{1}{1+\frac{1}{c_n}F_n(x_{E_i})}dx_{E_i}\leq \frac{\varepsilon}{\lambda_e},\label{sca_sop_constraint}\\
&\ 1+\sum_{l_n\in\Pi^*} c_n\leq \frac{P_{total}}{P_J}.\label{sca_power_constraint}
\end{align}
\end{subequations}
Next, by introducing slack variables $a$ and $b$ and using $ab=\frac{1}{4}\left[(a+b)^2-(a-b)^2\right]$, we consider the following problem
\begin{subequations}\label{sca_slack_variable}
\begin{align}
\min_{a, b, c_n, P_J}&\  \frac{1}{4}\left[(a+b)^2-(a-b)^2\right]\\
\mathrm{s.t.}&\ a\geq \frac{1}{P_J},\label{sca_constraint_a}\\
&\  b\geq \sum_{l_n\in\Pi^*}\frac{\psi_n}{c_n},\label{sca_constraint_b}\\
&\ \mathrm{(\ref{sca_sop_constraint})}, \mathrm{(\ref{sca_power_constraint})}.\notag
%&\ \sum_{{l_n}\in\Pi^*}\int_{R^2}\frac{1}{1+\frac{1}{c_n}F_n(x_{E_i})}dx_{E_i}\leq \frac{\varepsilon}{\lambda_e},\\
%&\ 1+\sum_{l_n\in\Pi^*} c_n\leq \frac{P_{total}}{P_J}.
\end{align}
\end{subequations}
Problems (\ref{sca_change_variable}) and (\ref{sca_slack_variable}) are equivalent at the optimal solution, since both the inequality constraints (\ref{sca_constraint_a}) and (\ref{sca_constraint_b}) are active at the optimal solution, otherwise $a$ and $b$ could be decreased to obtain a lower objective value. Therefore, we focus on solving  (\ref{sca_slack_variable}) in the following.

The terms $-(a-b)^2$, $\sum_{n}\int_{R^2}\frac{1}{1+F_n(x_{E_i})/c_n}dx_{E_i}$, and $-P_{total}/P_J$ are non-convex and problem (\ref{sca_slack_variable}) is difficult to be solved.
%and we need to approximate this non-convex feasible set by a convex subset.
Denoting the optimal solutions of the convex approximation problem at the $(k-1)$th iteration as $a^{(k-1)}, b^{(k-1)}, P_J^{(k-1)} $, and $ \mathbf{c}^{(k-1)}=\left[c_1^{(k-1)},...,c_N^{(k-1)}\right]^T$,
%==========integer and differential  detail explain===========
we use the first-order Taylor expansion
around $a^{(k-1)}, b^{(k-1)}, P_J^{(k-1)} $, and $ \mathbf{c}^{(k-1)}$ to approximate the non-convex terms
 and construct a convex optimization problem \cite{sca_cwang}, which can be written as%, the first-order Taylor expansion is usually used \cite{sca_cwang}.
\begin{align}
\hspace{-3mm}H_1(a,a^{(k-1)},b,b^{(k-1)}) & \triangleq \left(a^{(k-1)}-b^{(k-1)}\right)^2\nonumber\\
&-2\left(a-b\right)\left(a^{(k-1)}-b^{(k-1)}\right),
\end{align}
\begin{align}
\hspace{-2mm}H_2(\mathbf{c},\mathbf{c}^{(k-1)})\triangleq&
\sum_n\int_{R^2}\Bigg[\frac{c_n^{(k-1)}}{c_n^{(k-1)}+F_n(x_{E_i})}\\
&+\frac{\big(c_n-c_n^{(k-1)}\big)F_n(x_{E_i})}{\big(c_n^{(k-1)}+F_n(x_{E_i})\big)^2}\Bigg]dx_{E_i}, \ \mathrm{and}
\end{align}
\begin{flalign}
H_3(P_J,P_J^{(k-1)})\triangleq-\frac{2P_{total}}{P_J^{(k-1)}}+\frac{P_{total}P_J}{{P_J^{(k-1)}}^2},&&
\end{flalign}
respectively.
% \cite{sca_cwang}.
%H\left(\mathbf{p},\mathbf{p}^{(k-1)}\right)\triangleq\sum_{{l_n}\in\Pi^*}{g_n}\left( P_{T_n}^{(k-1)}\right)
%+\sum_{l_n\in\Pi^{*}}\left(P_{T_n}-P_{T_n}^{(k-1)}\right)\int_{R^2}\frac{c_n(x_{E_i})}{\left(P_{T_n}^{(k-1)}+c_n(x_{E_i})\right)^2}dx_{E_i}.
%\end{equation}
Therefore, the convex approximation problem is constructed as
\begin{equation}\label{sca_prob}
\begin{split}
\min_{a, b, \mathbf{c}, P_J}&\  \frac{1}{4}\left[(a+b)^2+H_1(a,a^{(k-1)},b,b^{(k-1)})\right]\\
\mathrm{s.t.}&\ a\geq \frac{1}{P_J},\  b\geq \sum_{l_n\in\Pi^*}\frac{\psi_n}{c_n},\\
&\ H_2(\mathbf{c},\mathbf{c}^{(k-1)})\leq \frac{\varepsilon}{\lambda_e},\\
&\ 1+\sum_{l_n\in\Pi^*} c_n+H_3(P_J,P_J^{(k-1)})\leq 0.
\end{split}
\end{equation}
Starting from one feasible initial point and solving (\ref{sca_prob}) iteratively till convergence is reached, %the optimum $\mathbf{p}^*$ can be obtained which is
a sub-optimal solution of problem (\ref{sca_slack_variable}) can be obtained.
The entire procedure is summarized in Algorithm 3.
%
%===========================================

{\it{Initial point selection:}}
The initial point of the SCA method should be a feasible solution of problem (\ref{sca_slack_variable}).
To find a feasible initial point efficiently, %the value of $P_J$ is set randomly, and the outer polyblock approximation algorithm summarized in Algorithm 2 is utilized to
the outer polyblock approximation algorithm can be utilized. With a random jamming power $P_J$ and the corresponding solutions of Algorithm 2 \{$\tilde{P}_{T_n}^*, n=1,...,N$\}, an initial point of the SCA method can be generated according to
$P_J^{(1)}=P_J$,
$\{c_n^{(1)}=\tilde{P}_{T_n}^*/P_J^{(1)},n=1,...,N\}$,
$a^{(1)}\geq 1/P_J^{(1)}$, and
$b^{(1)}\geq \sum_{l_n\in\Pi^*}\psi_n/c_n^{(1)}$.
%and \cite{sca_cwang} provides an efficient iterative algorithm to obtain it. %\begin{equation}\label{sca_initial_point}
%\begin{split}
%\min_{s, P_{T_n}, P_J}&\ s\\
%\mathrm{s.t.}&\ H\left(\mathbf{q},\mathbf{q}^{(k-1)}\right)-\frac{\varepsilon}{\lambda_e}\leq s\\
%&\ P_J+\sum_{l_n\in\Pi^*} P_{T_n}-P_{total}\leq s.
%\end{split}
%\end{equation}
%where the value $s>0$ measure how far the corresponding constraints is from being satisfied \cite{sca_cwang}.

%===============================================

\begin{algorithm}
  \caption{Successive convex approximation algorithm for (\ref{sca_slack_variable})}
  \label{sca_algorithm}
  \begin{algorithmic}[1]
%  \REQUIRE $\psi_n$, $\varepsilon$ and accuracy $\rho$
%  \ENSURE $\bar{P}_{T_n}^*$
  \STATE {\bf{Initialization}}: Set an initial feasible point\\
  $(a^{(0)},b^{(0)},\mathbf{c}^{(0)},P_J^{(0)})$, accuracy $\rho$, and $k=0$;
  \WHILE{the difference of the optimal values of two successive iterations surpasses $\rho$,}
  \STATE $k=k+1$;
  \STATE Compute the solution of (\ref{sca_prob});
  \ENDWHILE
  \STATE $\bar{P}_{J}^*=P_J^{(k)},\bar{P}_{T_n}^*=c_n^{(k)}P_J^{(k)},n=1,...,N$;
  \end{algorithmic}
\end{algorithm}

 %To explain the feasibility of the obtained solution to the non-convex problem (\ref{sca_slack_variable}) and the convergence of Algorithm 3, we have the following two propositions.
{\it{Feasibility and convergence:}}
The solution of (\ref{sca_prob}) belongs to the feasible set of problem (\ref{sca_slack_variable}).
Since  $\sum_{n}\int_{R^2}\frac{1}{1+F_n(x_{E_i})/c_n}dx_{E_i}\leq H_2(\mathbf{c},\mathbf{c}^{(k-1)})$, and $-P_{total}/P_J\leq H_3(P_J,P_J^{(k-1)})$ are satisfied
based on the first-order condition \cite{convex_optimization}, the feasible set of problem (\ref{sca_prob}) is a convex subset of (\ref{sca_slack_variable}).
Besides, since the optimal solution at the $(k-1)$th iteration is a feasible solution for problem (\ref{sca_slack_variable}) at the $k$th iteration, the optimized objective value at the $k$th iteration should not surpass that at the $k$th iteration.
From the non-negative and non-increasing objective values obtained from the iterations, the convergence of the proposed SCA algorithm can be concluded.

%============================================
{\it{Complexity and performance:}}
Problem (\ref{sca_prob}) can be reformulated as an SDP problem which has a computational complexity in the order of $O(N^{3.5})$ \cite{sdp_complexity}. Then the complexity of Algorithm 3 is $\bar{k}\times O(N^{3.5})$, where $\bar{k}$ denotes the number of iterations and is about few tens.
Comparing to the complexity of the polyblock approximation with one-dimension search method $\kappa\times O(N\tilde{k}^2)$,
in which $\tilde{k}$ is few hundreds and $\kappa$ is usually large for the searching accuracy, the complexity using SCA method is much smaller.
The selection of the initial point will impact the final results due to the locally optimal solution produced by the SCA method.
To improve the performance and decrease the influence, Algorithm 3 can be executed for a number of different initial points, and the solution obtaining the minimum will be select as the final optimized outcome.
%The simulation produces solutions quite close to the globally optimal solutions generated from the outer polyblock approximation with one-dimension search method.
%close-to-optimal system performance.
%=========================
% with fewer iterations of the SCA method compar
%===========================
Simulation shows that
operating the SCA algorithm for only few tens of initial points, the selected minimum is quite close to the optimum generated from the outer polyblock approximation with one-dimension search among thousands of different values of $P_J$.
%The efficiency is greatly enhanced without much lose of performance.

%Therefore, can greatly enhance the computational efficiency and provide a near-optimal solution.

%the assumption of friendly jamming. Substituting $\tilde{P}_{T_n}^*$ into expression (\ref{PCO}), the minimal COP can be obtained.

\section{Numerical Results}

 In this section, numerical results are presented to illustrate the validity  and performance of the proposed schemes. We assume that $M=10$ legitimate nodes are deployed in a $20\times 20\ \mathrm{m}^2$ square area while the eavesdroppers are randomly distributed in the area of size $400\times 400\ \mathrm{m}^2$.
 We set SNR thresholds $\gamma_c=0.8$ dB and $\gamma_e=0$ dB. The path loss exponent $\alpha =4$. %and $\sigma^2=-90$ dBm.
 The density of PPP distributed eavesdroppers is $\lambda_e=10^{-4}$. %The transmit power of the jammer is set as $P_J=10$ dBm.

\subsection {Comparison Between Polyblock with One-dimention Search and SCA Algorithms}

%\begin{figure}
%\centering
%\includegraphics[width=3.0in]{iteration_new}
%\caption{Convergence performance under fixed power of jammer and SOP constraint.}
%\label{iteration_poly}
%\end{figure}

\begin{figure}
\centering
\includegraphics[width=3.0in]{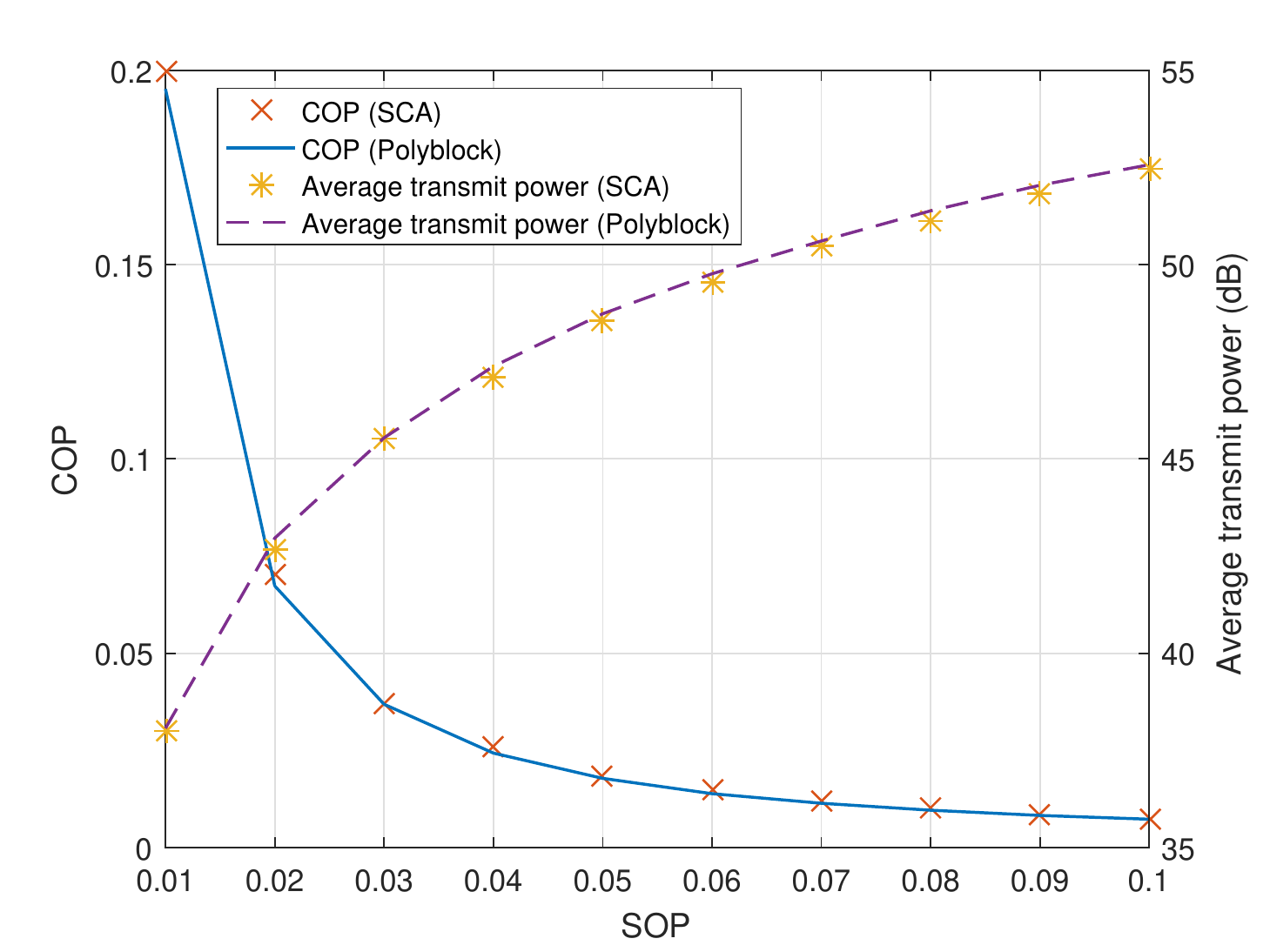}
\caption{COP and average transmit power versus SOP with friendly jamming.}
\label{sca_poly}
\end{figure}

%=================

Under the assistance of friendly jamming and the constraint of the total transmit power, the COP
and the average power of legitimate transmit nodes, which is defined as $\frac{1}{N}\sum{P_{T_{n}}}$ (normalized by the noise power $\sigma^2$),
obtained from the polyblock approximation with one-dimension search and the SCA algorithm with multiple initial points are compared in Fig. \ref{sca_poly}.
The polyblock algorithm is executed for $1000$ values of $P_J$, while the SCA algorithm is executed for $20$ different initial points. %The minimum objective values are derived and plotted respectively.
With the increase of SOP, the COP decreases while the average transmit power of the legitimate nodes increases, which is in accordance with the expressions of COP and SOP using friendly jamming derived in (\ref{pco_f}) and (\ref{pso_simp}). Specifically, (\ref{pco_f}) and (\ref{pso_simp}) indicate that COP is a decreasing function while SOP is an increasing function of the transmit powers. Therefore as the SOP becomes higher, the average transmit power increases accordingly and hence the COP decreases.
In fact, the relaxation of secure outage performance permits the transmitters to achieve a better connection performance at the expense of a higher transmit powers.
On the other hand, it is clear from the simulation that though theoretically the SCA algorithm obtains a suboptimal solution,
by executing the SCA algorithm for only a few different initial points and selecting the minimum as the final result,
the solutions are close to that generated from the outer polyblock approximation with one-dimension search method, while the complexity is largely reduced.

 %the locally optimal solution is very close to the globally optimal results obtained from the polyblock algorithm. %The accordance of the results from two different methods also reflects the correctness of Algorithms 2 and 3.

\subsection {Performance of Algorithms}

\begin{figure}
\centering
\includegraphics[width=3.0in]{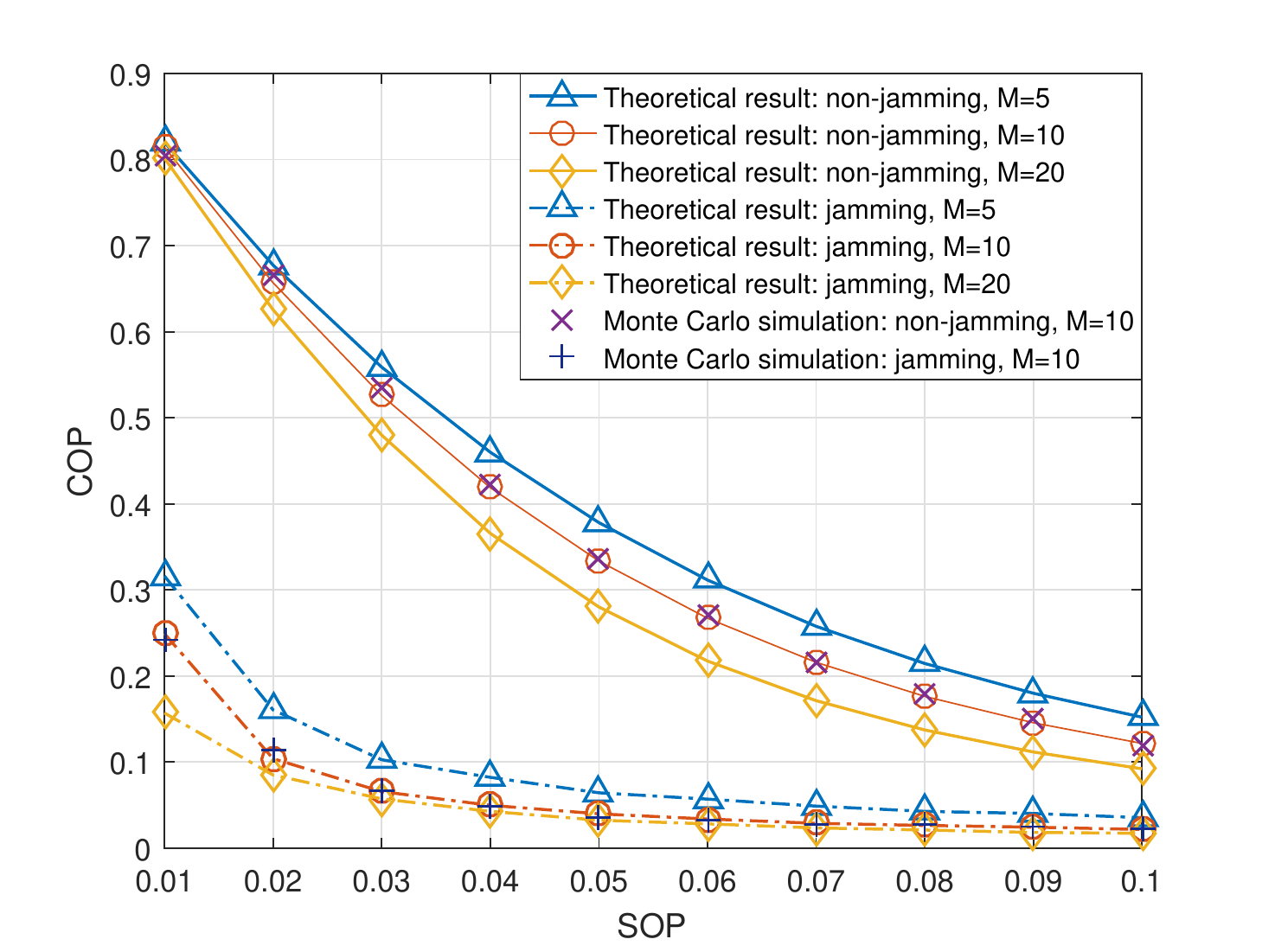}
\caption{COP versus SOP with the variation of the number of legitimate nodes.}
\label{nodes_51020}
\end{figure}

\begin{figure}
\centering
\includegraphics[width=3.0in]{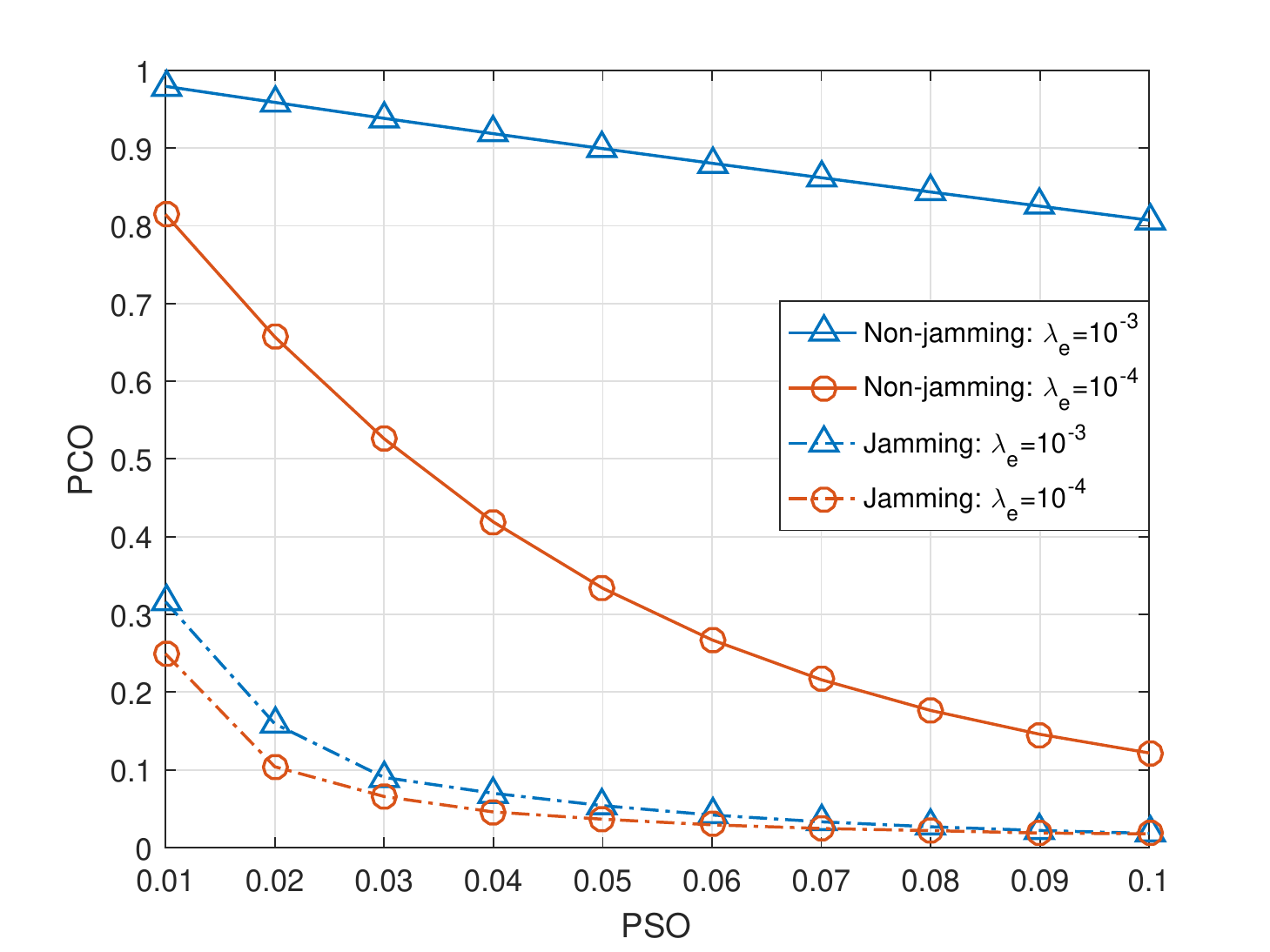}
\caption{COP versus SOP with the variation of the density of malicious eavesdroppers.}
\label{density_345}
\end{figure}

In this part, we discuss the secure performance under non-jamming and jamming conditions by comparing the results derived from Algorithm 1 and the polyblock approximation with one-dimension search method.
%=============================================
First, the correctness of the
secure routing and transmit power optimization
algorithms for non-jamming and jamming conditions
 are verified by comparing the theoretical results with the Monte Carlo simulation results.
The Monte Carlo simulation results are obtained through 10,000 simulation runs.
As Fig. \ref{nodes_51020} illustrates, the theoretical results match the Monte Carlo results for both circumstances with or without the assistance of friendly jamming, which verifies the correctness of the analytical results we derived.
%=============================================

Next, we study the performance of COP versus SOP for different values of $M$ in the network in Fig. \ref{nodes_51020}.
It shows that as the number of nodes increases, the connection outage probability decreases. Since the density of legitimate nodes in a certain area increases, the distance of each hop is shortened and the SNR at each receiving node increases. Thus it is less likely that the SNR is below the given threshold and the occurrence of connection outage is reduced.

The density of the eavesdroppers in the network $\lambda_e$ is varied and the results are plotted in Fig. \ref{density_345}.
Since the increase in the density of eavesdroppers  $\lambda_e$ exposes the legitimate nodes to more eavesdroppers, which increases the probability that the message may be interpreted by malicious eavesdropper(s), the SOP becomes higher as $\lambda_e$ increases for a fixed COP, as Fig. \ref{density_345} shows.
Besides, from Fig. \ref{nodes_51020} and Fig. \ref{density_345}, it is clear that
 the COP applying friendly jamming is much lower than that without the help of jammer under a fixed SOP requirement, which indicates the performance improvement with the assistance of jamming.
% When $\lambda_e= 10^{-5}$, however, the performance using jamming is limited due to the constraint of total transmit powers.

%Besides,
% from Fig. \ref{nodes_51020} and Fig. \ref{density_345}, it is also clear that
% the COP applying friendly jamming is much lower than that without the help of jammer under a fixed SOP requirement, which indicates the performance improvement with the assistance of jamming.

%==========================================================================
\begin{figure}
\centering
\includegraphics[width=3.0in]{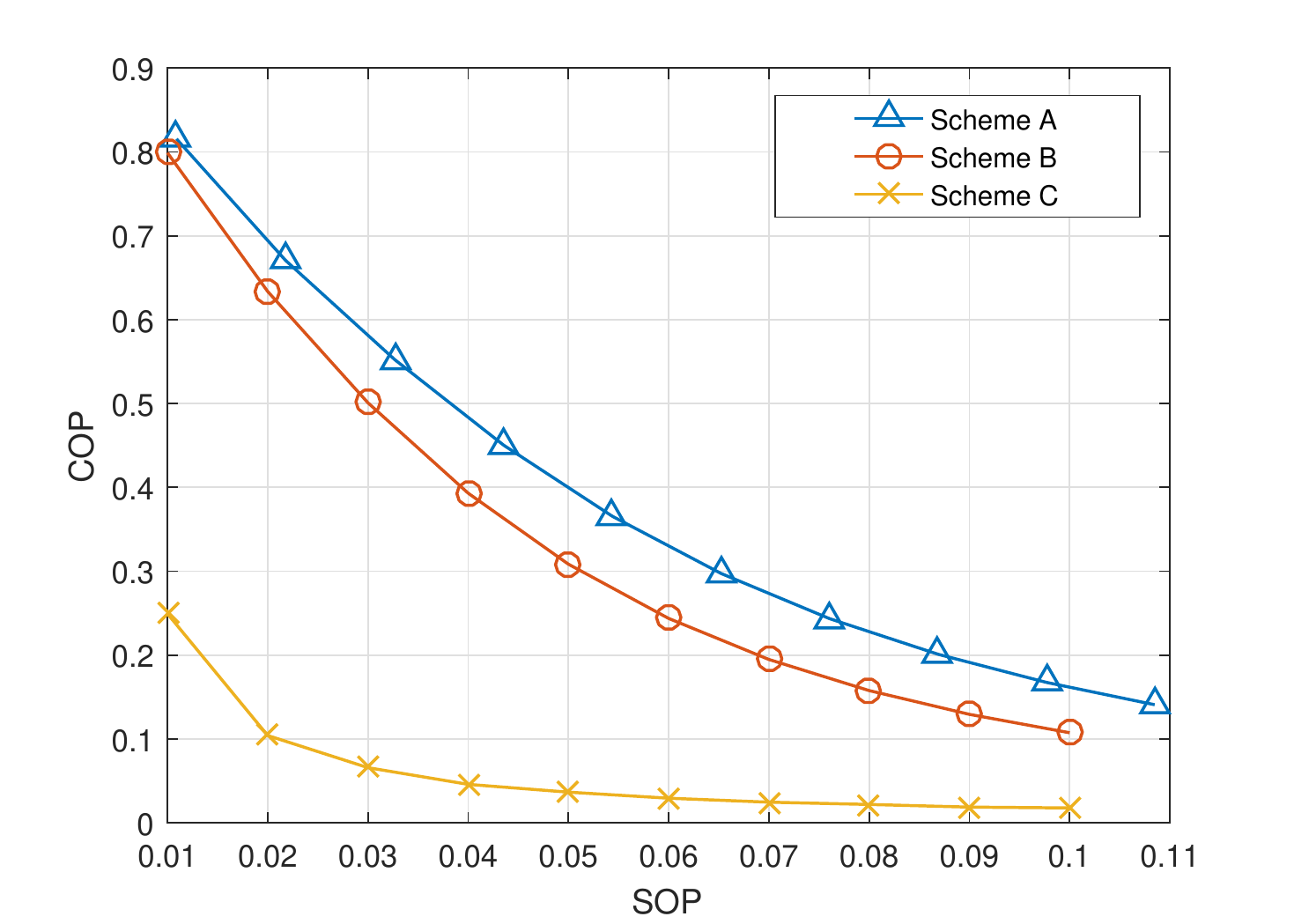}
\caption{COP versus SOP for three different schemes:
A: setting identical transmit powers;
B: Algorithm 1; %$(\psi_n)^{\frac{2}{2+\alpha}}$ as link weights and power optimization
C: jammer assisted tramsmission.}
\label{cmp_jam_and_nojammer}
\end{figure}

%======================================================

In order to further explain the performance improvement, we consider three different optimization schemes.
In scheme A, we use $\psi_n^{\frac{2}{2+\alpha}}$ as link weights for routing and set identical transmit powers for selected nodes. %$d_{T_nR_n}^\alpha$ as link weights for routing, as \cite{P6} % zhouxiangyun
%mentioned, in which the selected nodes have identical transmit powers.
Scheme B also uses $\psi_n^{\frac{2}{2+\alpha}}$ as the weights and further optimizes the transmit powers, which represents our proposed Algorithm 1.
In scheme C friendly jamming is used to help improve the outage performance.
%using a jammer in the network.
Given the value of the
maximum tolerable SOP $\zeta$, for scheme B and C, the transmit powers are optimized respectively;
%Under each constraint of SOP for scheme B and C, the transmit powers are optimized respectively,
keeping the sum of transmit powers in scheme A equals to that in scheme B, the transmit powers of each node using scheme A is derived and the corresponding achievable COP and SOP are obtained according to (\ref{pco_f}) and (\ref{pso_f}).
%With the derived transmit powers satisfying each SOP constraints in scheme B, the sum of transmit powers in scheme A is set equal to that in scheme B, .
%The sum powers of scheme A and B are equal.
The SOP and COP outcomes for three schemes with the variation of $\zeta$ are plotted in Fig. \ref{cmp_jam_and_nojammer}.
The superior performance of scheme B compared to scheme A is clear, since the transmit powers for selected nodes in scheme B are optimized while that in scheme A are simply set equal.
In fact, the optimization of powers for transmit nodes balances the COP and SOP and avoids the cases that higher powers cause secure outage while lower ones lead to connection outage.
%, and the COP under the assistance of the jammer is much lower than that in the original relay network
With the interference of artificial noise at the eavesdroppers, scheme C improves the connection performance remarkably, comparing to both schemes A and B.

%========================================================

\begin{figure}
\centering
\includegraphics[width=3.0in]{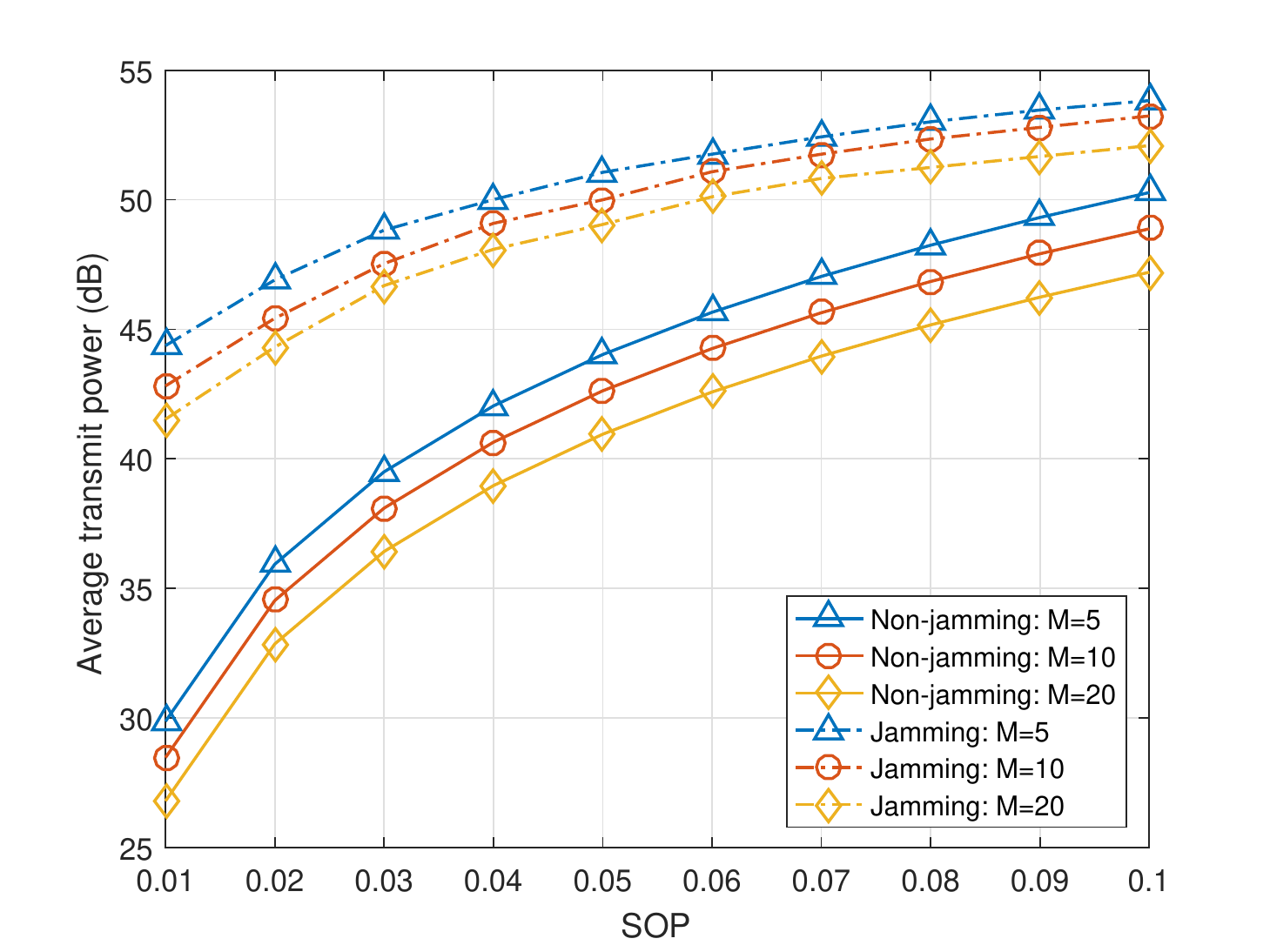}
\caption{Average transmit power versus SOP with the variation of the number of legitimate nodes.}
\label{power_node51020}
\end{figure}

\begin{figure}
\centering
\includegraphics[width=3.0in]{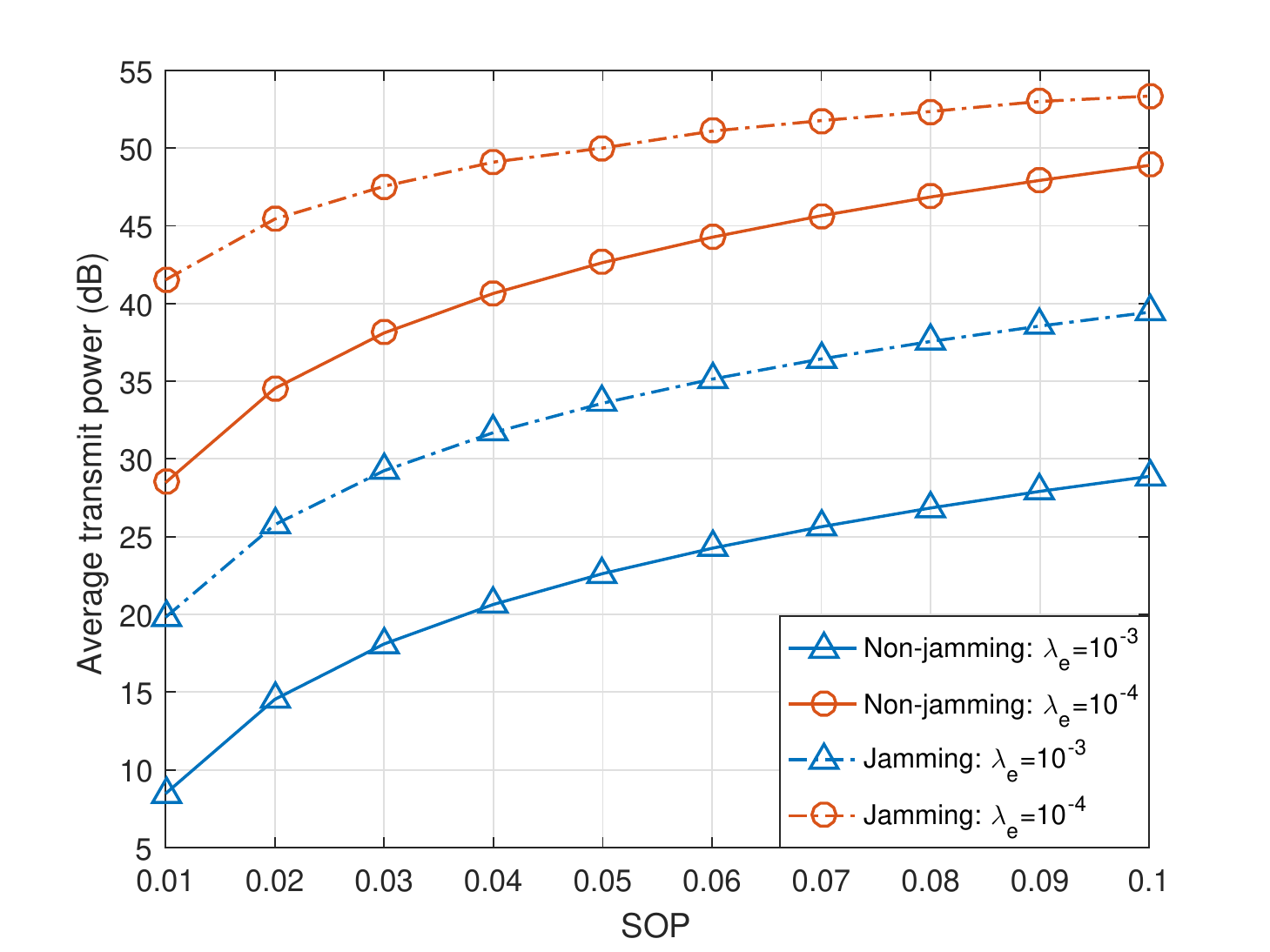}
\caption{Average transmit power versus SOP with the variation of the density of malicious eavesdroppers.}
\label{power_density543}
\end{figure}

\begin{figure*}
\centering
\subfigure[$M=10$]{
\begin{minipage}{0.3\textwidth}
\centering
\includegraphics[scale=0.8]{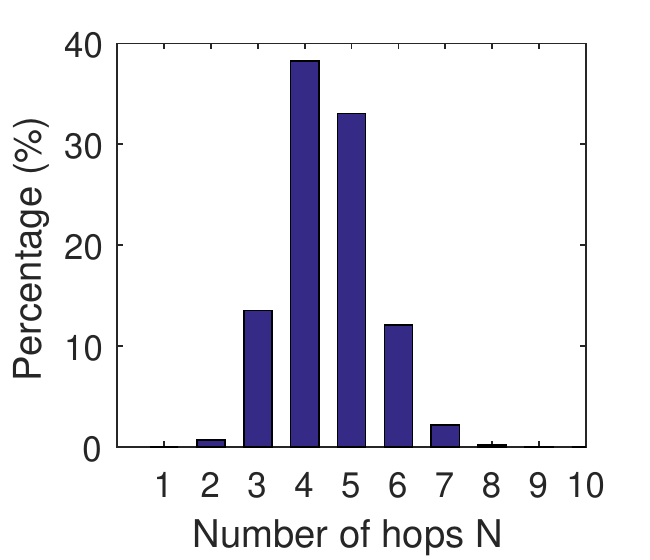}
\end{minipage}
}
\subfigure[$M=20$]{
\begin{minipage}{0.3\textwidth}
\centering
\includegraphics[scale=0.8]{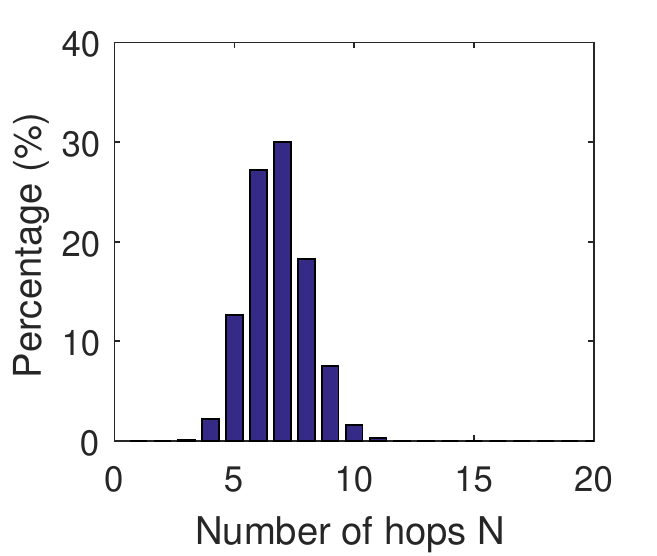}
\end{minipage}
}
\subfigure[$M=30$]{
\begin{minipage}{0.3\textwidth}
\centering
\includegraphics[scale=0.8]{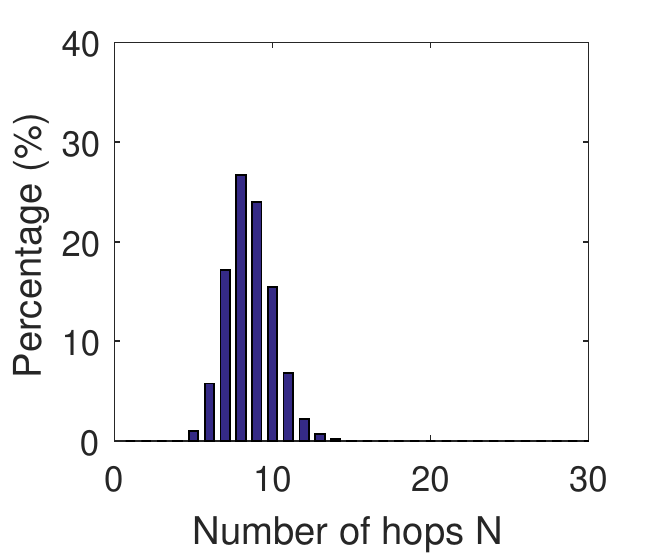}
\end{minipage}
}
\caption{The distribution of numbers of hops for different numbers of legitimate nodes $M$ in the network.}
\label{fig_tradeoff}
\end{figure*}
The variation of average transmit power of the legitimate transmit nodes (normalized by the noise power $\sigma^2$)
 with the SOP constraint is studied. The results under different values of $M$ and $\lambda_e$
  are depicted in Fig. \ref{power_node51020} and \ref{power_density543}, respectively.
We can observe that the average transmit power of the legitimate transmit nodes decreases with the increasing numbers of nodes in the network,
%==================================================
which is due to the shorter distance between the transmitter and receiver at each hop, so that the transmit power requirement for a success communication decreases.
%That is due to a shorter distance between hops and thus the decrease of powers required for successful communication among legitimate nodes.
%==================================================
Besides, the increase of the density of eavesdroppers leads to the decrease of transmit powers for legitimate nodes, so as to reduce the risks of being interpreted by malicious eavesdropper(s) for satisfying the SOP constraint.
Comparing the results under non-jamming and jamming conditions, it is obvious that the average transmit power of legitimate transmit nodes using friendly jamming is higher than that without the help of jammer. %(the exception in Fig. \ref{power_density543} when $\lambda_e=10^{-5}$ is due to the limitation of total transmit power).
As a matter of fact, with the help of friendly jammer, security is guaranteed, thus the nodes can increase their powers to reduce communication outage.

\subsection{Trade-off Exploration}

We explore the trade-off between the numbers of hops and the transmit powers of legitimate nodes considering the routing security.
Given a pair of fixed source and destination nodes, we count the numbers of hops of the optimal secure routes obtained from 10,000 simulation runs. The distributions of the number of hops under different $M$ are plotted in Fig. \ref{fig_tradeoff}.
%The derived distribution of the numbers of hops for the optimal routes is studied in Fig. \ref{route_number_distribution}. The numbers of hops required are derived. The distribution results are obtained from 10,000 simulation runs.
It can be summarized that the situations requiring a large
numbers of hops, e.g. $N>13$, under $M=30$, rarely happen, as they can lead to high probability of information leakage.
This indicates that more relaying nodes for help does not necessarily lead to a better security performance. Besides, the scenarios with very few numbers of hops rarely happen either, which usually require high powers to guarantee information transmission thus hard to satisfy the security requirement.
%Furthermore, with the SOP constraint set as $\zeta=0.01$, we derive the normalized average transmit powers of the legitimate nodes for different numbers of hops under the non-jamming scenario. It is clear that the more the number of hops, the less average transmit power for each node required to guarantee the network security. A similar conclusion can be obtained under the assistance of friendly jamming.
The results above reflect the trade-off between the numbers of hops and transmit power on routing security and indicate the importance of %routing for cooperative communication.
jointly optimizing transmit power and route selection for secure cooperative communication.

\section{conclusion}
In this paper, we considered the problem of secure routing for DF relaying ad-hoc networks with PPP distributed eavesdroppers.
We aimed to optimize the secure route and transmit powers which minimize the COP under a SOP constraint.
The closed-form expressions of COP and SOP for a given route were obtained and
the corresponding optimal transmit powers were optimized.
Besides, the routing weights were derived based on the expression of the transmit powers and the optimal secure route was obtained using the Dijkstra's algorithm.
We further studied the power optimization problem considering the application of friendly jamming. It was shown that this problem could be solved iteratively by the outer polyblock approximation with one-dimension search method.
On the other hand, considering the non-convexity of this problem, we %approximated the non-convex optimization problem by a sequence of convex problems and
proposed an iterative suboptimal algorithm based on the SCA algorithm.
Simulation results indicated that the SCA algorithm could provide a near-optimal solution with a lower complexity compared to the polyblock approximation with one-dimension search algorithm.
The performance improvement of the proposed algorithms for both non-jamming and jamming scenarios was demonstrated and the influence of network settings on the optimization outcomes was derived.
The discussion on the numbers of hops reflected the trade-off between the transmit powers and the routing selection.

\linespread{1.2}
%\linespread{0.86}

\end{document}